\def\boxit#1{\vbox{\hrule\hbox{\vrule\kern6pt \vbox{\kern6pt#1\kern5pt}
\kern6pt\vrule}\hrule}}
\newcommand{\bX}{{\boldsymbol X}}
\newcommand{\by}{{\boldsymbol y}}
\newcommand{\bY}{{\boldsymbol Y}}
\newcommand{\bx}{{\boldsymbol x}}
\newcommand{\bz}{{\boldsymbol z}}
\newcommand{\bZ}{{\boldsymbol Z}}
\newcommand{\bA}{{\boldsymbol A}}
\newcommand{\bE}{{\boldsymbol E}}
\newcommand{\bV}{{\boldsymbol V}}
\newcommand{\bG}{{\boldsymbol G}}
\newcommand{\bU}{{\boldsymbol U}}
\newcommand{\bI}{{\boldsymbol I}}
\newcommand{\bJ}{{\boldsymbol J}}
\newcommand{\bM}{{\boldsymbol M}}
\newcommand{\bs}{{\boldsymbol s}}
\newcommand{\bS}{{\boldsymbol S}}
\newcommand{\bt}{{\boldsymbol t}}
\newcommand{\mR}{\mathbb{R}}
\newcommand{\var}{{\mbox{Var}}}
\newcommand{\bbeta}{{\boldsymbol \beta}}
\newcommand{\bepsilon}{{\boldsymbol \epsilon}}
\newcommand{\bSigma}{{\boldsymbol \Sigma}}
\newtheorem{theorem}{Theorem}[]
\newtheorem{lemma}{Lemma}[]
\newtheorem{algorithm}{Algorithm}[]
\newtheorem{remark}{Remark}[]
\begin{document}
\bibliographystyle{asa}

\title{Markov Neighborhood Regression for High-Dimensional Inference}

\author{Faming Liang, Jingnan Xue, Bochao Jia\thanks{ Liang is Professor, Department of Statistics, Purdue University, West Lafayette, IN 47906,
 email: fmliang@purdue.edu. Xue is Data Scientist at Houzz, Palo Alto, California. 
 Jia is Research Scientist, Eli Lilly and Company, Lilly Corporate Center, Indianapolis, IN 46285.}
}


\maketitle

\begin{abstract}

 This paper proposes an innovative method
 for constructing confidence intervals and assessing $p$-values in statistical inference for high-dimensional linear 
 models. The proposed method has successfully broken
 the high-dimensional inference problem into a series of low-dimensional inference problems: 
 For each regression coefficient $\beta_i$, the confidence interval and $p$-value are computed 
 by regressing on a subset of variables selected according to the 
 conditional independence relations between the corresponding variable $X_i$ 
 and other variables. Since the subset of variables 
 forms a Markov neighborhood of $X_i$ in 
 the Markov network formed by all the variables $X_1,X_2,\ldots,X_p$, 
 the proposed method is coined as Markov neighborhood regression.  The proposed method is tested on high-dimensional linear, 
 logistic and Cox regression. The numerical results indicate that the proposed method 
 significantly outperforms the existing ones. 
 Based on the Markov neighborhood regression, a method of learning causal structures for high-dimensional linear 
 models is proposed and applied to  
 identification of drug sensitive genes
 and cancer driver genes. 
 The idea of using conditional independence relations for dimension
 reduction is general and potentially can be extended to 
 other high-dimensional or big data problems as well.  

\vspace{1mm}

\underline{Keywords:}
 Confidence Interval; Causal Structure Discovery; Gaussian Graphical Model; $p$-value
\end{abstract}


{\centering \section{Introduction}} 

 During the past two decades, the dramatic improvement in data collection and acquisition technologies
 has enabled scientists to collect a great amount of high-dimensional data,
 for which the dimension $p$ can be much larger than the sample size $n$ (a.k.a. small-$n$-large-$p$).
 The current research on high-dimensional data mainly focuses on
 variable selection and graphical modeling. 
 The former aims to find a consistent estimate for high-dimensional regression under a
 sparsity constraint. The existing methods include
  Lasso \citep{Tibshirani1996}, SCAD \citep{FanL2001}, MCP \citep{Zhang2010}, and
 rLasso \citep{SongL2015a}, among others.
  The latter aims to learn conditional independence relationships for a large set of
  variables. The existing methods include  graphical Lasso \citep{YuanL2007, FriedmanHT2008},
 nodewise regression \citep{MeinshausenB2006}, and $\psi$-learning \citep{LiangSQ2015}, among others.

 Recently many researchers turn their attention to statistical inference for high-dimensional data, 
 aiming to quantify uncertainty of high-dimensional regression, e.g., constructing confidence intervals and assessing $p$-values for the regression coefficients.
 The Bayesian methods \citep{LiangSY2013, SongL2015b} can potentially be used for this purpose,
 but are time-consuming. The existing frequentist methods include  
 desparsified Lasso, multi sample-splitting, and ridge projection, among others.   
 Refer to \cite{DezeureBMM2015} for a comprehensive overview.
 
 The desparsified Lasso method was proposed in \cite{vandeGeer2014}, which is also essentially
 the same as the one developed in \cite{ZhangZhang2014} and \cite{Javanmard2014}.
 For the high-dimensional linear regression 
\[
\bY=\bX \bbeta+\bepsilon,
\]
 where $\bepsilon$ are zero-mean Gaussian random errors, desparsified Lasso defines a bias-corrected 
 estimator
 \begin{equation} \label{dlassoeq0}
 \hat{\bbeta}_{bc}=\hat{\bbeta}_{Lasso}+\hat{\Theta}\bX^{T}(\by-\bX\hat{\bbeta}_{Lasso})/n,
 \end{equation} 
 where $\hat{\bbeta}_{Lasso}$ is the original Lasso estimator, and $\hat{\Theta}$ is
 an approximator to the inverse of $\hat{\bSigma}=\bX^{T}\bX/n$. 
 From (\ref{dlassoeq0}), one can obtain
\begin{equation} \label{dlassoeq}
 \sqrt{n}(\hat{\bbeta}_{bc}-\bbeta)=\hat{\Theta}\bX^{T}\bepsilon/\sqrt{n}
  +\sqrt{n}(I_{p}-\hat{\Theta} \hat{\bSigma})(\hat{\bbeta}_{Lasso}-\bbeta):= 
   \hat{\Theta}\bX^{T}\bepsilon/\sqrt{n}+\Delta_n,
\end{equation}
where $I_{p}$ denotes the $p \times p$ identity matrix, and $\Delta_n$ is the error term.
With an appropriate estimator $\hat{\Theta}$, e.g., the one obtained by nodewise regression \citep{meinshausen2006},
it can be shown that 
 $\|\Delta_n\|_{\infty}=o_{p}(1)$ and thus 
  $\sqrt{n}(\hat{\bbeta}_{bc}-\bbeta)$ shares the same asymptotic distribution
 with $\hat{\Theta}\bX^{T}\bepsilon/\sqrt{n}$.
Further, to calculate confidence intervals for $\bbeta$, one needs to
approximate the distribution of $\hat{\Theta}\bX^{T}\bepsilon/\sqrt{n}$.
For example,  \cite{Javanmard2014} approximated it by $N(0,\hat{\sigma}^{2}\hat{\Theta}\hat{\bSigma}\hat{\Theta}^{T})$,
where $\hat{\sigma}^2$ is a consistent estimator of $\sigma^2$; 
and \cite{ZhangandCheng} approximated it using multiplier bootstrap.

 The multi sample-splitting method was proposed and analyzed in \cite{MeinshausenMB2009}, which  
 works in the following procedure:  Splitting the samples into two subsets equally, 
 using the first half of samples for variable selection 
 and using the second half of samples for calculating $p$-values based on the selected variables; repeating this process for many times; and aggregating the 
 $p$-values for statistical inference.  
 The confidence intervals can be constructed based on their duality with $p$-values. 
 The idea about sample-splitting and subsequent 
 statistical inference has also been implicitly contained in \cite{WassermanR2009}.   
 The multi sample-splitting method is very general 
 and can be applied to many different types of models.  
 The ridge projection method was studied in \cite{Buhlmann2013}, which can be viewed as 
 a direct extension of the low-dimensional ridge regression
 to the high-dimensional case.  The bias of the ridge estimator has been assessed and corrected,  
 the distribution of the ridge estimator has been derived and approximated, 
 and thus the method can be used for statistical inference.  
 
 The other methods include residual-type bootstrapping \citep{ChatterjeeL2013, LiuYu2013},  
 covariance test \citep{Lockhart2014}, and group-bound \citep{Meinshausen2015}.  
 A problem with the residual-type bootstrapping method is the super-efficiency 
 phenomenon; that is, a confidence interval of a zero regression coefficient might be the 
 singleton $\{0\}$. The covariance test method relies on the solution path of the Lasso 
 and is much related to the post-selection 
 inference methods \citep{Berk2013, LeeDSYJ2016, Tibshirani2016, FithianST2014}.  
 The group-bound method provides a good treatment for highly correlated variables, but 
 has often a weak power in detecting the effect of individual variables. 
 Recently, some methods based on the idea of estimating a low-dimensional component 
 of a high-dimensional model have also been proposed, see e.g. \cite{Belloni2015}
 and \cite{YangYun2017}.

  This paper proposes an innovative method, the so-called Markov neighborhood regression (MNR), for high-dimensional inference.  
  By making use of conditional independence relations among different explanatory variables, 
  it successfully breaks the high-dimensional inference problem into a series of low-dimensional 
  inference problems. The proposed method is fundamentally different from the existing ones, such as 
  desparsified-Lasso, ridge projection, and multi sample-splitting, 
  which strive to work on the problem in its original scale. The proposed method has been tested on high-dimensional linear, 
  logistic and Cox regression. The numerical results indicate that the proposed method 
  significantly outperforms the existing ones, while having competitive computational efficiency. Based on the concept of MNR, this paper also proposes a new method for learning the 
  causal structure of high-dimensional regression and applies it to identification of drug sensitive genes and cancer driver genes. 
  The idea of using conditional independence for dimension
  reduction is general and can be applied to 
  many other high-dimensional or big data problems as well.  

  The remaining part of this paper is organized as follows. Section 2 describes the MNR method and establishes its validity. 
  Section 3 presents numerical results  
  on simulated data along with comparisons with some existing methods. 
  Section 4 proposes a method for learning the
  causal structures of high-dimensional regression. 
  Section 5 presents some real data examples.
  Section 6 concludes the paper with a brief discussion. 

\section{Markov Neighborhood Regression} 

\subsection{Linear Regression} 

 Suppose that a set of $n$ independent samples $D_{n}=\{(Y^{(i)},\bX^{(i)})_{i=1}^{n} \}$ have been collected
 from the linear regression with a random design: 
\begin{equation} \label{modeleq1}
 Y=\beta_0+X_{1}\beta_1+ \ldots+ X_{p}\beta_p+\epsilon,
\end{equation}
 where $\epsilon$ follows the normal distribution $N(0,\sigma^{2})$,
 and the explanatory variables, or known as features, 
 $\bX=(X_{1}, \ldots, X_{p})$ follows a multivariate normal distribution $N_{p}(\mathbf{0},\Sigma)$. In what follows, we will call 
 $\{X_i: \beta_i\ne 0, i=1,2,\ldots,p \}$ and  $\{X_i: \beta_i=0, i=1,2,\ldots,p\}$ 
 the true and false features, respectively. 
 Without loss of generality, we assume that the features have been standardized
 such that $E(X_{j})=0$ and Var$(X_{j})=1$ for $j=1,\dots,p$.
 Further, we represent $\bX$ by an undirected graph $\bG=(\bV,\bE)$,
 where $\bV=\{1,2,\ldots,p\}$ denotes the set of $p$ vertices, $\bE=(e_{ij})$ denotes the adjacency matrix,
 $e_{ij}=1$ if the $(i,j)$th entry of the precision matrix $\Theta=\Sigma^{-1}$ is nonzero and
 0 otherwise.
  We use $\bX_{\bA}=\{X_k: k\in \bA\}$ to denote a set of features
  indexed by $\bA \subset \bV$,
  and use $P_{\bV}$ to denote the joint probability distribution of $\bX_{\bV}$.
  For a triplet $\bI, \bJ, \bU \subset \bV$,
  we use $\bX_{\bI} \perp \bX_{\bJ} | \bX_{\bU}$ to denote that $\bX_{\bI}$ is {\it conditionally independent}  of $\bX_{\bJ}$ given $\bX_{\bU}$. 
  A {\it path} of length $l>0$ from a vertex $v_0$ to another vertex $v_l$ is a sequence 
  $v_0,v_1,\ldots,v_l$ of distinct vertices such that $e_{v_{k-1},v_k}=1$ for $k=1,2,\ldots,l$.
  The subset $\bU \subset \bV$ is said to {\it separate}  $\bI \subset \bV$ from
 $\bJ \subset \bV$ if for every $i \in \bI$ and $j \in \bJ$, all paths from
 $i$ to $j$ have at least one vertex in $\bU$.
   $P_{\bV}$ is said to satisfy the {\it Markov property} with respect to
 $\bG$ if for every triple of disjoint sets $\bI, \bJ, \bU \subset \bV$, it holds that $X_{\bI} \perp X_{\bJ} | X_{\bU}$
  whenever $\bU$ separates $\bI$ and $\bJ$ in $\bG$.
  Let $\xi_{j}=\{k: e_{jk}=1\}$
  denote the neighboring set of $X_{j}$ in $\bG$.
  Following from the Markov property of the Gaussian graphical model (GGM), we have
  $X_{j} \perp X_{i} |\bX_{\xi_j}$ for any $i \in \bV \setminus \xi_j$, as $\xi_j$ forms a separator
  between $X_i$ and $X_j$.
 For convenience, we call $\xi_j$ the minimum Markov neighborhood of $X_{j}$ in $\bG$,
 and call any superset $\bs_j \supset \xi_j$ a Markov neighborhood of $X_{j}$ in $\bG$.
 The minimum Markov neighborhood is also termed as Markov blanket in Bayesian networks or 
 general Markov networks. 

 To motivate the proposed method, we first look at a simple mathematical fact based on the 
 well known property of Gaussian graphical models (see e.g. \cite{Lauritzen1996}): 
 \[
  X_i \perp X_j |X_{\bV\setminus \{i,j\}} \Longleftrightarrow \theta_{ij}=0,
  \]
where $\theta_{ij}$ denotes the $(i,j)$-th entry of $\Theta$.  
 Without loss of generality, we let $\bS_1=\{2, \ldots, d\}$
 denote a Markov neighborhood of $X_1$,  
 let $\Sigma_d$ denote the covariance matrix of $\{X_{1}\} \cup \bX_{\bS_1}$,
 and partition $\Theta$ as
\begin{equation} \label{Leq1}
\Theta= \begin{bmatrix}
\Theta_{d}     & \Theta_{d,p-d} \\
\Theta_{p-d,d}       & \Theta_{p-d}  
\end{bmatrix},
\end{equation}
where the first row of $\Theta_{d,p-d}$ and the first column of $\Theta_{p-d,d}$ are exactly zero, as 
$X_1 \perp \bX_{\bV\setminus(\{1\}\cup \bS_1)} | \bX_{\bS_1}$ holds. 
Inverting $\Theta$, we have
$\Sigma_{d}=(\Theta_{d}-\Theta_{d,p-d}\Theta_{p-d}^{-1}\Theta_{p-d,d})^{-1}$,
 which is equal to the top $d\times d$-submatrix of $\Theta^{-1}$. Therefore, 
\begin{equation} \label{Leq2}
 \Sigma_{d}^{-1}=\Theta_{d}-\Theta_{d,p-d}\Theta_{p-d}^{-1}\Theta_{p-d,d}.
\end{equation}
 Since the first row of $\Theta_{d,p-d}$ and the first column of $\Theta_{p-d,d}$ are exactly zero,
 the $(1,1)$th element of  $\Theta_{d,p-d}\Theta_{p-d}^{-1}\Theta_{p-d,d}$ is exactly zero. Therefore,
   the $(1,1)$-th entry of $\Theta_d$ (and $\Theta$) equals to the $(1,1)$-th entry of $\Sigma_{d}^{-1}$.
 This suggests that {\it if $\{X_{1}\} \cup \bX_{\bS_1} \supset \bX_{\bS_*}$ holds and 
 $n$ is sufficiently large,  where $\bS_*$ denote the index 
 set of true features of the model (\ref{modeleq1}), then the statistical inference for
 $\beta_1$ can be made based on the 
 subset regression:}
 \begin{equation} \label{modeleq2}
 Y=\beta_0'+X_{1}\beta_1+X_{2} \beta_2'+\ldots+X_{d} \beta_d'+\epsilon,  
 \end{equation}
 where the prime on $\beta_k$'s for $k\ne 1$
 indicates that those regression coefficients might be modified by the subset regression. 
 Since $\bS_1$ forms a Markov neighborhood of $X_{1}$ in the Markov network formed 
 by all features, we call (\ref{modeleq2})
 a {\it Markov neighborhood regression}, which breaks the high-dimensional inference problem
 into a series of low-dimensional inference problems.
 Based on this mathematical fact, we propose the following algorithm:
 
 \begin{algorithm} (Markov Neighborhood Regression) \label{subsetAlg1}
 \begin{itemize}
 \item[(a)] (Variable selection) Conduct variable selection for the model (\ref{modeleq1})
  to get a consistent estimate of $\bS_*$.  Denote the estimate by $\hat{\bS_*}$.
 
 \item[(b)] ({\it Markov blanket estimation})
   Construct a GGM for $\bX$ and
   obtain a consistent estimate of the Markov blanket for each variable.
   Denote the estimates by $\hat{\xi}_j$ for $j=1,2,\ldots, p$.

 \item[(c)] (Subset regression) For each variable $X_j$, $j=1,\ldots,p$, 
  let $D_j=\{j\}\cup \hat{\xi}_j \cup \hat{\bS}_*$ and run an Ordinary Least Square (OLS) regression with
  the features given by $\bX_{D_j}$, i.e., 
 \begin{equation} \label{regeqD}
 Y=\beta_0+\bX_{D_j}\bbeta_{D_j}+\epsilon,
 \end{equation}
  where $\epsilon \sim N(0,\sigma^2I_n)$ and $I_n$ is an $n\times n$-identity matrix.
  Conduct inference for $\beta_j$, including the estimate, confidence interval and $p$-value,
  based on the output of (\ref{regeqD}). 
\end{itemize}
\end{algorithm}

  The Markov neighborhood corresponding to 
  the subset regression (\ref{modeleq2}) is 
  $\{2,3,\ldots,d\} \supseteq \hat{\xi}_1 \cup \hat{\bS}_*$. 
  In general, $\hat{\xi}_1 \cup \hat{\bS}_*$ can be any a subset of $\{1,2,\ldots,p\}$ 
  depending on the ordering of features in (\ref{modeleq1}).   
 Algorithm \ref{subsetAlg1} can be implemented in many different ways. For example,
 a variety of high-dimensional variable selection algorithms can be
 used for step (a), e.g.,
 Lasso \citep{Tibshirani1996}, SCAD \citep{FanL2001}, MCP \citep{Zhang2010} and
 rLasso \citep{SongL2015a}, which are all able to produce a consistent
 estimate for $\bS_*$ under appropriate conditions. Similarly, 
 a variety of graphical model learning algorithms can be used for step (b),
  e.g., graphical Lasso \citep{YuanL2007, FriedmanHT2008},
 nodewise regression \citep{MeinshausenB2006}, and $\psi$-learning \citep{LiangSQ2015},
 which all produce a consistent estimate for the underlying GGM under
 appropriate conditions.
 To justify Algorithm \ref{subsetAlg1}, we introduce the following lemma, which can be proved based on the basic theory of high-dimensional linear regression and the mathematical relation shown around equations (\ref{Leq1}) and (\ref{Leq2}). Refer to the Supplementary Material for the detail of the proof. 

\begin{lemma} \label{lemma01}
 Let $\hat{\xi}_{j}\supseteq\xi_{j}$ denote any Markov neighborhood of feature $x_{j}$,  let 
 $\hat{\bS}_* \supseteq \bS_*$ denote any reduced feature space, and let $D_j=\{j\} \cup \hat{\xi}_{j} \cup \hat{\bS}_*$.  
 Consider the subset regression (\ref{regeqD}).
 Let $\hat{\bbeta}_{D_j}$ denote the OLS estimator of $\bbeta_{D_j}$ from the subset regression, and   
 let $\hat{\beta}_{j}$ denote the element of $\hat{\bbeta}_{D_j}$ corresponding to the variable $X_j$. 
 If $|D_j|=o(n^{1/2})$, as $n\to\infty$, the following results hold:
 \begin{itemize}
 \item[(i)] $\sqrt{n}(\hat{\beta}_{j}-\beta_{j}) \sim 
 N(0,\sigma^{2}\theta_{jj})$,  where 
   $\theta_{jj}$ is the $(j,j)$-th entry of the precision matrix $\Theta$.  
 \item[(ii)] 
 $\sqrt{n} \frac{\hat{\beta}_j-\beta_j}{\sqrt{ \hat{\sigma}_n^2 \hat{\theta}_{jj}}}  \sim N(0,1)$,  
  where $\hat{\sigma}_n^2=(\by-\bx_{D_j}\hat{\bbeta}_{D_j})^T(\by-\bx_{D_j}\hat{\bbeta}_{D_j})/(n-d-1)$, 
  $\hat{\theta}_{jj}$ is the $(j,j)$-th entry of 
  the matrix $\bigg[\frac{1}{n} \sum_{i=1}^{n}\bx^{(i)}_{D_j}(\bx^{(i)}_{D_j})^{T} \bigg]^{-1}$, and
  $\bx^{(i)}_{D_j}$ denotes the $i$-th row of $\bX_{D_j}$.  
\end{itemize}
\end{lemma}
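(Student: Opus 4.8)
The plan is to reduce the claim to classical least-squares theory with a slowly growing number of regressors, the only nonstandard ingredient being the algebraic identity around (\ref{Leq1})--(\ref{Leq2}). Throughout, treat $D_j$ as a fixed index set with $\{j\}\cup\xi_j\cup\bS_*\subseteq D_j$ and $d:=|D_j|=o(n^{1/2})$. The first observation is that the subset regression (\ref{regeqD}) is \emph{correctly specified}: since $\bS_*\subseteq D_j$ and $\epsilon\perp\bX$ with $E(\epsilon)=0$, we have $E(Y\mid\bX_{D_j})=\beta_0+\bX_{\bS_*}\bbeta_{\bS_*}$, which is linear in $\bX_{D_j}$; hence the population least-squares coefficient of the regression of $Y$ on the intercept and $\bX_{D_j}$ equals $\bbeta$ on the coordinates in $\bS_*$ and $0$ on $D_j\setminus\bS_*$. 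In particular its $X_j$-component is exactly $\beta_j$ (which is $0$ when $j\notin\bS_*$), so there is nothing to debias and we may write $\bY=\beta_0\mathbf{1}+\bX_{D_j}\bbeta^{0}_{D_j}+\bepsilon$ with the $j$-th entry of $\bbeta^{0}_{D_j}$ equal to $\beta_j$.

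Next I would argue conditionally on the design. By Gaussianity of $\bepsilon$, conditional on $\bX$ the OLS estimator $\hat{\bbeta}_{D_j}$ is exactly Gaussian and unbiased for $\bbeta^{0}_{D_j}$, so $\sqrt{n}(\hat\beta_j-\beta_j)\mid\bX\sim N(0,\sigma^2 v_n)$, where $v_n$ is the $(j,j)$ entry of $n$ times the inverse of the centered Gram matrix of $\bX_{D_j}$; since $E(X_k)=0$ for all $k$, the sample means are $o_p(1)$ and $v_n$ coincides, up to an $o_p(1)$ term, with $\hat\theta_{jj}=\big[(\tfrac1n\sum_{i=1}^n\bx^{(i)}_{D_j}(\bx^{(i)}_{D_j})^{T})^{-1}\big]_{jj}$. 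Because the standardized quantity $\sqrt{n}(\hat\beta_j-\beta_j)/\sqrt{\sigma^2 v_n}$ has conditional law $N(0,1)$ that does not depend on $\bX$, it is exactly $N(0,1)$ and independent of $v_n$. It then remains to identify the limit of $v_n$: the law of large numbers for the $d$-dimensional Gaussian vector $\bX_{D_j}$ gives $\bX_{D_j}^{T}\bX_{D_j}/n\to\Sigma_{D_j}$ and, with $\lambda_{\min}(\Sigma_{D_j})$ bounded away from $0$, $v_n\xrightarrow{p}[\Sigma_{D_j}^{-1}]_{jj}$. Here I invoke the Markov structure: since $D_j\setminus\{j\}\supseteq\xi_j$, applying the computation around (\ref{Leq1})--(\ref{Leq2}) with $j$ playing the role of the index $1$ and $D_j\setminus\{j\}$ that of $\bS_1$ shows that the $(j,j)$ entry of $\Sigma_{D_j}^{-1}$ equals $\theta_{jj}$. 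Combining with the conditional normality and Slutsky's theorem gives $\sqrt{n}(\hat\beta_j-\beta_j)\xrightarrow{d}N(0,\sigma^2\theta_{jj})$, which is (i).

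For (ii), note $\hat\theta_{jj}=v_n+o_p(1)\xrightarrow{p}\theta_{jj}>0$, so it suffices to show $\hat\sigma_n^2\xrightarrow{p}\sigma^2$; then the studentized ratio equals the exact-$N(0,1)$ statistic of the previous paragraph times $\sqrt{\sigma^2 v_n/(\hat\sigma_n^2\hat\theta_{jj})}\xrightarrow{p}1$, and Slutsky finishes. For the consistency of $\hat\sigma_n^2$: $(n-d-1)\hat\sigma_n^2=\|(I_n-P)\bepsilon\|^2$ with $P$ the projection of rank $d+1$ onto the column span of the intercept and $\bX_{D_j}$; conditional on $\bX$ this is distributed as $\sigma^2\chi^2_{n-d-1}$, so since $d=o(n)$ both the mean-normalization and the variance-normalization behave as required and $\hat\sigma_n^2\xrightarrow{p}\sigma^2$.

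The main obstacle is not the probabilistic part---because $\bepsilon$ is Gaussian and I condition on the design, the distributional statements are exact and no CLT is needed---but the uniform control, as the number of regressors $d$ grows, of $\bX_{D_j}^{T}\bX_{D_j}/n-\Sigma_{D_j}$ and of the induced convergence of the $(j,j)$ entry of its inverse. This is exactly where $d=o(n^{1/2})$ enters: for a Gaussian design the operator-norm deviation $\|\bX_{D_j}^{T}\bX_{D_j}/n-\Sigma_{D_j}\|$ is of order $\sqrt{d/n}+d/n$, so it vanishes, and then $\big[(\bX_{D_j}^{T}\bX_{D_j}/n)^{-1}\big]_{jj}\to[\Sigma_{D_j}^{-1}]_{jj}$ provided $\lambda_{\min}(\Sigma_{D_j})$ stays bounded away from $0$ (a mild regularity condition on $\Sigma$ that I expect to be among the paper's standing assumptions). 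The only remaining bookkeeping is the centering/intercept matching between the lemma's uncentered $\hat\theta_{jj}$ and the variance of the intercept-included OLS, which is harmless since $E(X_k)=0$ forces the sample means to $0$.
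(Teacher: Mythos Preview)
Your argument is correct and reaches the same conclusion, but the route differs from the paper's in one essential place. The paper treats $(\bx_{D_j}^{(i)},u_i)$ as a random $d$-dimensional summand and appeals to an \emph{unconditional} CLT in growing dimension (Theorem~1.1 of Portnoy, 1986) to get $\frac{1}{\sqrt n}\sum_i \bx_{D_j}^{(i)}u_i \xrightarrow{d} N(0,\sigma^2\Sigma_d)$, then combines this with the WLLN/continuous mapping for $(\bX_{D_j}^{T}\bX_{D_j}/n)^{-1}$ and the algebraic identity (\ref{Leq1})--(\ref{Leq2}). You instead condition on the design and exploit the exact Gaussianity of $\epsilon$, so the standardized statistic is \emph{exactly} $N(0,1)$ and independent of $v_n$; the asymptotics then enter only through $v_n\xrightarrow{p}\theta_{jj}$, which you control by operator-norm concentration of the sample covariance. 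Both proofs share the same two key steps---the precision-matrix identity and the convergence of $(\bX_{D_j}^{T}\bX_{D_j}/n)^{-1}$---and your $\chi^2_{n-d-1}$ argument for $\hat\sigma_n^2$ matches the paper's Remark~\ref{remA}. Your approach is more elementary (no Portnoy-type CLT needed) and, in fact, your argument goes through under the weaker condition $d=o(n)$ rather than $d=o(n^{1/2})$, since the operator-norm bound $\|\bX_{D_j}^{T}\bX_{D_j}/n-\Sigma_{D_j}\|=O_p(\sqrt{d/n})$ already suffices for the inverse to converge; the $o(n^{1/2})$ threshold is really the price the paper pays for the unconditional CLT. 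Conversely, the paper's route is the one that extends to the GLM setting of Lemma~\ref{GLMlemma}, where the conditional-on-design distribution is no longer exactly normal and something like Portnoy (1988) is genuinely needed.
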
 
 
 \begin{remark} \label{remA} 
  For the case that $n$ is finite, we have   
  $(n-|D_j|-1) \hat{\sigma}_n^2/\sigma^2 \sim \chi^2(n-|D_j|-1)$, independent of $\hat{\bbeta}_{D_j}$, 
  by the standard theory of OLS estimation.  
  Therefore, we can use $t(n-|D_j|-1)$ to approximate the distribution of 
  $\sqrt{n} \frac{\hat{\beta}_j-\beta_j}{\sqrt{ \hat{\sigma}_n^2 \hat{\theta}_{jj}}}$; 
  that is, {\it the estimate, $p$-value and confidence interval of $\beta_j$ can be 
  calculated from (\ref{regeqD}) as in conventional low-dimensional multiple linear regression. }
 \end{remark}

 Lemma \ref{lemma01} implies that Algorithm \ref{subsetAlg1} will be valid 
 as long as the following conditions hold:
\begin{eqnarray} 
 \hat{\xi}_{j} & \supseteq & \xi_{j}, \ \forall j \in \{1,2,\ldots,p\}, \label{ceq1} \\
 \hat{\bS}_* & \supseteq &  \bS_*, \label{ceq2} \\
 |D_j|&= & o(\sqrt{n}). \label{ceq3} 
\end{eqnarray} 
 Condition (\ref{ceq2}) is the so-called screening property, which is known to be satisfied by many 
 high-dimensional variable selection algorithms, such as SCAD \citep{FanL2001}, MCP \citep{Zhang2010}
 and adaptive Lasso \citep{Zou2006}.  Lasso also satisfies this condition if the design matrix satisfies the 
 compatibility condition \citep{vandeGeerB2009}, $|\bS_*|=o(n/\log(p))$, and the beta-min condition holds. 
 See \cite{DezeureBMM2015} for more discussions on this issue. 
 Given the sure screening property of the above variable selection procedure, if the nodewise regression algorithm \citep{MeinshausenB2006} is applied to learn the GGM in step (b) of Algorithm \ref{subsetAlg1}, then 
 the condition (\ref{ceq1}) can be satisfied. In fact, as along as the GGM construction algorithm 
 is consistent, the condition (\ref{ceq1}) will be asymptotically satisfied. 
 Further, the condition (\ref{ceq3}) can be easily 
 satisfied by a slight twist of the sparsity conditions used in the variable selection and 
 GGM estimation algorithms.  

 As an example, we give in the Appendix a set of technical conditions (A0)--(A9) 
 under which the conditions (\ref{ceq1})--(\ref{ceq3}) can be asymptotically satisfied, 
 provided that the SCAD algorithm is used for variable selection and 
 the $\psi$-learning algorithm is used for GGM estimation. 
 Based on these technical conditions, 
 the validity of Algorithm \ref{subsetAlg1} is justified in Theorem \ref{Them1}, whose proof is straightforward based on Slutsky's theorem and some existing theoretical results. Refer to 
 the Supplementary Material for the detail. 
 If different algorithms are used in Algorithm \ref{subsetAlg1}, then the 
 conditions used in Theorem \ref{Them1} should be changed accordingly.  
 We note that many conditions we imposed in proving the theorem
 are purely technical and only serve to provide theoretical understanding 
 of the proposed method. We have no intent to make the conditions the weakest possible.  

\begin{theorem} \label{Them1} (Validity of Algorithm \ref{subsetAlg1}) 
   If the conditions (A0)-(A9) hold, 
   the SCAD algorithm is used for variable selection in step (a), 
   and the $\psi$-learning algorithm is used for GGM construction in step (b),  
  then for each $j \in \{1,2,\ldots,p_n\}$, we have 
  $\sqrt{n} \frac{\hat{\beta}_j-\beta_j}{\sqrt{ \hat{\sigma}_n^2 \hat{\theta}_{jj}}}  \sim N(0,1)$ as $n\to\infty$, 
  where $\hat{\beta}_j$ denotes the estimate of $\beta_j$ obtained from the subset regression, 
  $\hat{\sigma}_n^2=(\by-\bx_{D_j}\hat{\bbeta}_{D_j})^T(\by-\bx_{D_j}\hat{\bbeta}_{D_j})/(n-d-1)$,
  $\hat{\theta}_{jj}$ is the $(j,j)$-th entry of
  the matrix $\bigg[\frac{1}{n} \sum_{i=1}^{n}\bx^{(i)}_{D_j}(\bx^{(i)}_{D_j})^{T} \bigg]^{-1}$, and
  $\bx^{(i)}_{D_j}$ denotes the $i$-th row of $\bX_{D_j}$.
\end{theorem}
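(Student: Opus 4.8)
The plan is to derive Theorem~\ref{Them1} as an immediate consequence of Lemma~\ref{lemma01}: once the structural conditions \eqref{ceq1}--\eqref{ceq3} that feed that lemma are shown to hold with probability tending to one under (A0)--(A9) (with SCAD used in step~(a) and $\psi$-learning in step~(b)), the studentized statistic automatically inherits the $N(0,1)$ limit. Fix $j\in\{1,\ldots,p_n\}$, write $T_n=\sqrt{n}(\hat\beta_j-\beta_j)/\sqrt{\hat\sigma_n^2\hat\theta_{jj}}$ for the statistic in question, let $D_j^{\star}=\{j\}\cup\xi_j\cup\bS_*$ be its population-level regression set, and put
\[
\mathcal{E}_n=\big\{\hat\xi_k\supseteq\xi_k\ \text{for every }k\le p_n\big\}\cap\big\{\hat{\bS}_*\supseteq\bS_*\big\}\cap\big\{\textstyle\max_{k\le p_n}|D_k|=o(\sqrt{n})\big\}.
\]
The two things to establish are $P(\mathcal{E}_n)\to1$ (indeed $P(D_j=D_j^{\star})\to1$) and the transfer of the conditional CLT to the unconditional law of $T_n$.

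\textbf{Verifying \eqref{ceq1}--\eqref{ceq3}.} The event $\{\hat{\bS}_*\supseteq\bS_*\}$ is the sure screening property of SCAD; under the $\beta$-min, sparsity and design conditions contained in (A0)--(A9), SCAD moreover has the oracle property, so $\hat{\bS}_*=\bS_*$ with probability $\to1$ \citep{FanL2001}. The event $\{\hat\xi_k\supseteq\xi_k\ \forall k\}$ follows from consistency of the $\psi$-learning estimator of the GGM, which under (A0)--(A9) recovers each minimum Markov neighborhood exactly with probability $\to1$ \citep{LiangSQ2015}; a union bound over $k\le p_n$ costs nothing because the per-node error probabilities are required to decay faster than any polynomial in $n$. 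On the intersection of these two events $|D_k|\le 1+|\xi_k|+|\bS_*|$, so the sparsity clauses of (A0)--(A9) (maximal GGM degree and $|\bS_*|$ both $o(\sqrt{n})$) give $\max_k|D_k|=o(\sqrt{n})$. Hence $P(\mathcal{E}_n)\to1$, and the two exact-recovery statements together give $P(D_j=D_j^{\star})\to1$.

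\textbf{Transfer of the CLT.} Because $D_j^{\star}$ is deterministic with $\bS_*\subseteq D_j^{\star}$, $\xi_j\subseteq D_j^{\star}$ and $|D_j^{\star}|=o(\sqrt{n})$, Lemma~\ref{lemma01}(ii) applied to the regression \eqref{regeqD} on $\bX_{D_j^{\star}}$ gives that the associated studentized statistic $T_n^{\star}$ satisfies $T_n^{\star}\sim N(0,1)$ as $n\to\infty$ (the step from part~(i) to part~(ii) being Slutsky's theorem together with $\hat\sigma_n^2\to\sigma^2$ and $\hat\theta_{jj}\to\theta_{jj}$ in probability, the latter by the Markov-property identity around \eqref{Leq2}). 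Since $T_n=T_n^{\star}$ on $\{D_j=D_j^{\star}\}$, for any continuity set $B$ one has $\big|P(T_n\in B)-P(T_n^{\star}\in B)\big|\le P(D_j\ne D_j^{\star})\to0$, so $T_n\sim N(0,1)$ as $n\to\infty$, which is the claim.

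\textbf{Main obstacle.} The only genuinely delicate point is that \eqref{regeqD} is a post-selection fit: $D_j$ is built from the same data used in the regression. The argument above is clean precisely because selection is consistent, which collapses $D_j$ onto the single deterministic model $D_j^{\star}$ with probability $\to1$; if SCAD is only credited with the one-sided sure-screening property, then $D_j$ ranges over the finite, data-\emph{independent} family $\{D:\ \bS_*\cup\xi_j\subseteq D,\ |D|=o(\sqrt{n})\}$, and the last step must instead invoke a CLT for \eqref{regeqD} that is \emph{uniform} over this family --- obtainable by combining the Berry--Esseen-type rate behind Lemma~\ref{lemma01} with a union bound over the polynomially many admissible models, but that is where essentially all of the remaining care resides. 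If, beyond the pointwise statement as written, one wants the conclusion to hold simultaneously over all $j\le p_n$, one additionally propagates the $p_n$ union bounds throughout, which the growth rates imposed in (A0)--(A9) are set up to absorb.
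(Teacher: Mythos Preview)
Your proposal is correct and follows essentially the same route as the paper: establish exact recovery $P(\hat\xi_j=\xi_j)\to1$ via the $\psi$-learning consistency result of \cite{LiangSQ2015} and $P(\hat{\bS}_*=\bS_*)\to1$ via the SCAD/SIS-SCAD oracle property, use (A5) and (A7) to bound $|D_j|$, then invoke Lemma~\ref{lemma01} on the resulting deterministic index set and finish with Slutsky. The paper's proof additionally singles out condition~(A9) and Theorem~1 of \cite{FanGH2012} for the consistency of $\hat\sigma_n^2$ rather than folding it into Lemma~\ref{lemma01}(ii), and cites Theorem~5 of \cite{FanLv2008} for the variable-selection consistency, but these are cosmetic differences; your explicit treatment of the post-selection issue via the event $\{D_j=D_j^{\star}\}$ is, if anything, a cleaner articulation of the same argument.
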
  

\begin{remark} \label{remB}   Following Remark \ref{remA}, we can conduct inference for $\beta_j$  based on the output of the subset regression (\ref{regeqD}) as in 
  conventional low-dimensional multiple linear regression.  
 \end{remark}
 
 As implied by Theorem \ref{Them1}, variable selection for regression (\ref{modeleq1})
 can be converted as a multiple hypothesis testing problem for simultaneously testing the hypotheses
 \begin{equation} \label{mult-test}
 H_{0,j}:\  \beta_j=0 \Longleftrightarrow  H_{1,j}: \ \beta_j \ne 0, \quad  j=1,2,\ldots,p,
 \end{equation}
 based on the $p$-values obtained from the subset regressions. The consistency of 
 this test-based method follows from Theorem 2 of \cite{LiangSQ2015} as discussed in 
 Section \ref{CausalGaussoian}. 
 Compared to the regularization methods, e.g. Lasso, MCP, SCAD and adaptive Lasso \cite{Zou2006}, a significant 
 advantage of this method is that it controls the false discovery rate
 of selected features in an explicit way. In addition, since the screening property generally holds for these 
 regularization methods, see \cite{DezeureBMM2015}  for discussions on this issue, 
 the new method might result in a lower false discovery rate as shown 
 in Table \ref{FDRtab}.
 On the other hand,  since the $p$-value measures the contribution of 
 a feature to the regression 
 conditioned on all other $p-1$ features, MNR might not work well  
 when strong collinearity exists among certain true and false features. This case has been 
 excluded by Conditions A2 and A4, where the fixed upper bounds
 on correlations and $\psi$-partial correlations place some 
 additional restrictions on the design. 
  
  The essential conditions required by MNR 
  are only sparsity; that is, the true regression model is sparse and 
  the conditional independence structure among the features is sparse such that (\ref{ceq1})-(\ref{ceq3}) hold when appropriate algorithms are applied. 
 Similar conditions have been assumed by some existing  methods. For example, desparsified-Lasso requires the true model to be of size  
  $o(\sqrt{n}/\log p)$ (see e.g., Fact 2 of \cite{DezeureBMM2015}), which is 
  a little more restrictive than $o(n^{1/2})$ required by MNR;
 desparsified-Lasso also requires the precision matrix $\Theta$ to be row-sparse at a level 
 of $o(n/\log p)$, which is comparable with the Markov blanket size $o(\sqrt{n})$ required by MNR 
 when $\log(p)=n^{\delta}$ for some $\delta \approx 1/2$.  
 The multi sample-splitting and the 
 ridge projection methods require the screening property (\ref{ceq2}) only, 
 which seems weaker than the conditions required by MNR and desparsified-Lasso. 
 However, as shown later by numerical examples, 
 they both essentially fail even for the simple linear regression case. 
 The use of conditional independence relations seems important for high-dimensional inference. 
 
  For MNR, since the essential conditions
  are (\ref{ceq1})-(\ref{ceq3}),
  a variable screening-based algorithm will also work under appropriate conditions. Based on this observation, we propose Algorithm S1, which together with some numerical results are presented in the Supplementary Material. Compared to Algorithm \ref{subsetAlg1}, Algorithm S1 can be substantially faster but the resulting confidence intervals can be a little wider; that is, Algorithm S1 is an accuracy/efficiency trade-off version of Algorithm \ref{subsetAlg1}. 

Finally, we note that there are indeed scenarios that conditions (\ref{ceq1})-(\ref{ceq3})
are violated. For example, if  all the features are equally correlated or there  are a few features whose Markov blanket is of size $O(\sqrt{n})$ or larger, then the condition (\ref{ceq1}) will be violated, as the algorithm always restricts the Markov blanket to be of size $o(\sqrt{n})$ or smaller. Similarly, if the true model is of size 
 $O(\sqrt{n})$ or larger, then condition (\ref{ceq2}) will be violated. These conditions can also be violated by the algorithms used for Markov blanket estimation or variable selection, particularly when the sample size is small. 
 The screening property is itself a large sample property. Our numerical experience shows that the MNR method is pretty robust to violations of the conditions (\ref{ceq1})-(\ref{ceq3}). This will be demonstrated in Section \ref{equicSect}.

\subsection{Generalized Linear Models}
 
 The MNR method can be easily extended to the generalized linear models (GLMs) 
 whose density function is given in the canonical form
\begin{equation} \label{GLMeq1}
f(y|\bx,\bbeta)=\exp(\vartheta y-b(\vartheta)+c(y)),
\end{equation}
where $b(\cdot)$ is continuously differentiable, and $\vartheta$ is the natural parameter relating $y$ to the features $\bx$ via a linear function 
$\vartheta=\beta_0+x_{1}\beta_{1}+\cdots+x_{p} \beta_{p}$.
 This class of GLMs includes Poisson regression, logistic regression and linear regression (with known variance).
 Note that for Cox proportional hazards models, the parameters can be estimated by maximizing the 
 partial likelihood function \citep{Cox1975}, based on which the Cox regression
 can be converted to a Poisson regression. See, e.g., Chapter 13 
 of \cite{McCu:Neld:1989} for the detail. 
 This conversion is important, which enables the use of the MNR method for
 high-dimensional survival data analysis.
 
 To justify this extension, we establish the following lemma, where we assume that the features follow a multivariate normal distribution $N(0,\Sigma)$ and each has been standardized to have a mean 0 and variance 1. The proof follows the same logic as that of Lemma 
 \ref{lemma01}, but the precision matrix involved in Lemma \ref{lemma01} is replaced by the inverse of the Fisher information matrix of the GLM. Refer to the Supplementary Material for the detail.  

\begin{lemma}\label{GLMlemma}
 Let  $\hat{\xi}_{j}\supseteq \xi_{j}$ denote any Markov neighborhood of feature $x_{j}$, 
 let  $\hat{\bS}_* \supseteq \bS_*$ denote any reduced feature space, and let 
  $D_j=\{j\} \cup \hat{\xi}_{j} \cup \hat{\bS}_*$. 
 Consider a subset GLM with the features $\bX_{D_j}$, let $\hat{\bbeta}_{D_j}$ 
 denote the MLE of $\bbeta_{D_j}$, and let $\hat{\beta}_{j}$ denote the 
 component of $\hat{\bbeta}_{D_j}$ corresponding to feature $X_j$. 
 If $|D_j|=o(n^{1/2})$, then, as $n\to \infty$, the following results hold:
 \begin{itemize}
 \item[(i)]  $\sqrt{n}(\hat{\beta}_{j}-\beta_{j}) \sim N(0,k_{jj})$, 
  where $k_{jj}$ denotes the $(j,j)$-th entry of the inverse of the Fisher information matrix 
  $K=I^{-1}=[E(b''(\bx^{T}\bbeta)\bx\bx^{T})]^{-1}$,
  and $\bbeta$ denotes the true regression coefficients.  
 \item[(ii)] $\sqrt{n}(\hat{\beta}_{j}-\beta_{j})/\sqrt{\hat{k}_{jj}} \sim N(0,1)$, where 
 $\hat{k}_{jj}$ denotes the $(j,j)$-th entry of the inverse of the observed information matrix 
 $J_n(\hat{\bbeta}_{D_j})=-\sum_{i=1}^n H_{\hat{\bbeta}_{D_j}}(\log f(y_i|\bbeta_{D_j}, \bx_{D_j}))/n$ 
 and $H_{\hat{\bbeta}_{D_j}}(\cdot)$ 
 denotes the Hessian matrix evaluated at the MLE $\hat{\bbeta}_{D_j}$.
 \end{itemize}
 \end{lemma}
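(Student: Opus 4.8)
The plan is to follow the proof of Lemma~\ref{lemma01} almost verbatim, with the precision matrix $\Theta=\Sigma^{-1}$ there replaced by the inverse Fisher information $K=I^{-1}$, where $I=E[b''(\bX^{T}\bbeta)\bX\bX^{T}]$ is the Fisher information of the full GLM. Two facts carry the argument. First, since $\hat{\bS}_*\supseteq\bS_*$ and $\beta_i=0$ for $i\notin\bS_*$, the natural parameter $\vartheta=\beta_0+\bX_{\bS_*}^{T}\bbeta_{\bS_*}$ is a function of $\bX_{D_j}$ alone; integrating out $\bX_{\bV\setminus D_j}$ shows that the conditional law of $Y$ given $\bX_{D_j}$ is again of the form (\ref{GLMeq1}) with the \emph{same} coefficient $\beta_j$ on $X_j$ (and zeros on $D_j\setminus\bS_*$). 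Hence the subset GLM is correctly specified and $\hat\beta_j$ is the $X_j$-coordinate of a genuine MLE, for which the asymptotic variance of $\sqrt n(\hat\beta_j-\beta_j)$ is $[I_{D_j}^{-1}]_{jj}$ with $I_{D_j}=E[b''(\vartheta)\bX_{D_j}\bX_{D_j}^{T}]$. Second, this variance equals $[I^{-1}]_{jj}=k_{jj}$; everything else is standard GLM asymptotics with a mildly growing parameter dimension.

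The identity $[I_{D_j}^{-1}]_{jj}=k_{jj}$ is the analogue of the zero-row observation around (\ref{Leq1})--(\ref{Leq2}). Put $P=\{j\}\cup\xi_j\cup\bS_*$ and $Q=\bV\setminus P$. By joint normality, $\bX_Q\mid\bX_P\sim N(M\bX_P,\Lambda)$ with the conditional covariance $\Lambda$ \emph{not} depending on $\bX_P$, and the weight $w:=b''(\vartheta)$ is a function of $\bX_{\bS_*}$, hence of $\bX_P$; writing $A:=E[w\,\bX_P\bX_P^{T}]$ and $\bar w:=E[w]>0$, iterated expectations give
\[
 I=\begin{bmatrix} A & AM^{T}\\ MA & MAM^{T}+\bar w\,\Lambda \end{bmatrix}
  =L\begin{bmatrix} A & 0\\ 0 & \bar w\,\Lambda \end{bmatrix}L^{T},\qquad
  L=\begin{bmatrix} \bI & 0\\ M & \bI\end{bmatrix},
\]
so the $(P,Q)$ block of $K=I^{-1}$ is $-\bar w^{-1}M^{T}\Lambda^{-1}$. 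The crucial point is that the $j$-th column of $M$ vanishes: for $l\in Q$ the coefficient of $X_j$ in the Gaussian regression of $X_l$ on $\bX_P$ is zero iff $X_l\perp X_j\mid \bX_{P\setminus\{j\}}$, and the latter follows from the Markov blanket property $X_j\perp \bX_{\bV\setminus(\{j\}\cup\xi_j)}\mid \bX_{\xi_j}$ (note $l\notin\{j\}\cup\xi_j$ while $\xi_j\subseteq P\setminus\{j\}$) together with the weak-union property of conditional independence. Hence $K_{jk}=0$ for every $k\in Q$, in particular for every $k\in\bV\setminus D_j$ (since $D_j\supseteq P$), and the Schur-complement identity $I_{D_j}^{-1}=K_{D_jD_j}-K_{D_j,\bar D_j}\,K_{\bar D_j,\bar D_j}^{-1}\,K_{\bar D_j,D_j}$ with $\bar D_j=\bV\setminus D_j$ gives $[I_{D_j}^{-1}]_{jj}=K_{jj}=k_{jj}$.

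For part~(i) I would invoke the asymptotic theory of maximum likelihood in canonical GLMs with a slowly growing number of parameters. Under the standing regularity conditions (smoothness of $b$, light tails of the standardized Gaussian features, and eigenvalues of $I_{D_j}$ bounded away from $0$ and $\infty$) together with the sparsity bound $|D_j|=o(n^{1/2})$, existence and $\sqrt n$-consistency of $\hat\bbeta_{D_j}$ hold and the score equation linearizes to $\sqrt n(\hat\beta_j-\beta_j)=(I_{D_j}^{-1}\be_j)^{T}\tfrac{1}{\sqrt n}\sum_{i=1}^{n}s_i+o_p(1)$, with $s_i=(y_i-b'((\bx^{(i)}_{D_j})^{T}\bbeta_{D_j}))\bx^{(i)}_{D_j}$; a Lindeberg--Feller argument for this triangular array of scalars then yields $\sqrt n(\hat\beta_j-\beta_j)\Rightarrow N(0,[I_{D_j}^{-1}]_{jj})=N(0,k_{jj})$. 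Part~(ii) is Slutsky's theorem once $\hat k_{jj}\to_p k_{jj}$: the averaged observed information $J_n(\bbeta_{D_j})=\tfrac1n\sum_i b''((\bx^{(i)}_{D_j})^{T}\bbeta_{D_j})\bx^{(i)}_{D_j}(\bx^{(i)}_{D_j})^{T}$ converges to $I_{D_j}$ in operator norm by matrix concentration, a local Lipschitz bound on $b''$ together with $\|\hat\bbeta_{D_j}-\bbeta_{D_j}\|=o_p(1)$ transfers this to $J_n(\hat\bbeta_{D_j})$, and inverting (using the lower eigenvalue bound) and reading off the $(j,j)$ entry gives $\hat k_{jj}\to_p [I_{D_j}^{-1}]_{jj}=k_{jj}$.

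The main obstacle is the growing-dimension MLE step in part~(i): one has to control the second-order Taylor remainder of the estimating equation $\tfrac1n\sum_i(y_i-b'((\bx^{(i)}_{D_j})^{T}\bbeta))\bx^{(i)}_{D_j}=0$ uniformly over a $|D_j|$-dimensional neighbourhood of the truth, so that the quadratic term is $o_p(1)$ and both consistency and coordinatewise asymptotic linearity go through; this is exactly where $|D_j|=o(n^{1/2})$ enters, and one would lean on Portnoy-type / He--Shao-type results for $M$-estimation with diverging dimension specialized to the GLM score. By contrast, the correct specification of the subset GLM and the structural identity $[I_{D_j}^{-1}]_{jj}=k_{jj}$ are comparatively routine and reproduce the linear-regression template of Lemma~\ref{lemma01} step for step.
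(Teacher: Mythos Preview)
Your proposal is correct and follows the same overall template as the paper: invoke Portnoy-type GLM asymptotics under $|D_j|=o(n^{1/2})$ to get $\sqrt n(\hat\beta_j-\beta_j)\Rightarrow N(0,[I_{D_j}^{-1}]_{jj})$, then establish the structural identity $[I_{D_j}^{-1}]_{jj}=k_{jj}$, and finish part~(ii) by Slutsky. Where you genuinely differ is in the proof of the identity. The paper introduces the weighted variables $\bz=\sqrt{b''(\bx_{\bS_*}^{T}\bbeta_{\bS_*})}\,\bx$, observes that $I=E[\bz\bz^{T}]$ so that $K=I^{-1}$ encodes the partial-correlation structure of $\bz$, and then shows via a nodewise-regression argument that for any $i\notin\bS_*$ the best linear predictor of $\bz_i$ from $\{\bz_t:t\neq i\}$ has the same coefficients as the Gaussian regression of $\bx_i$ on $\{\bx_t:t\neq i\}$ (the common factor $\sqrt{b''}$ cancels), yielding the equivalence $k_{ij}=0\Leftrightarrow\theta_{ij}=0$ whenever one index lies outside $\bS_*$; the Schur-complement conclusion then follows exactly as in Lemma~\ref{lemma01}. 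Your route instead computes the block form of $I$ directly from the conditional Gaussian moments $\bX_Q\mid\bX_P\sim N(M\bX_P,\Lambda)$ and the fact that $b''(\vartheta)$ is $\bX_P$-measurable, reads off $K_{PQ}=-\bar w^{-1}M^{T}\Lambda^{-1}$, and kills the $j$-th row via $M_{\cdot,j}=0$ from the Markov-blanket property. Both arguments rest on the same two ingredients (the weight depends only on $\bX_{\bS_*}$; the Markov blanket controls the Gaussian conditional structure), but yours is an explicit block calculation while the paper's is a reduction to the linear case through the transformation $\bx\mapsto\bz$. The paper's version gives the slightly stronger statement $k_{ij}=0\Leftrightarrow\theta_{ij}=0$ for $i\notin\bS_*$ as a reusable proposition; your version is more self-contained and avoids the nodewise detour. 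For part~(ii) the arguments are essentially the same, the paper phrasing it as stochastic equicontinuity of $J_n(\cdot)$ via Lipschitz continuity, which is exactly your local-Lipschitz step.
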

 
Lemma \ref{GLMlemma} implies that the estimate, $p$-value and confidence interval of $\beta_j$ can be calculated from 
the subset GLM as in conventional low-dimensional GLMs. 
For GLMs, variable selection can be done using the SCAD, MCP or Lasso algorithm, and 
variable screening can be done using the sure independence screening algorithm 
developed in \cite{FanS2010}. By Theorem 5 of \cite{FanS2010}, we can bound the size 
of $\hat{\bS}_*$ by $O(n^{\frac{1}{2}-\frac{\varepsilon}{2}})$ for a small constant $\varepsilon>0$
with a slight modification of the technical conditions therein.  
Therefore, the theorems parallel to Theorem \ref{Them1} and Theorem S1 (in the Supplementary Material) can be proved for GLMs. For simplicity, they are omitted in the paper. 
 
\subsection{Joint Inference} \label{Sect2.3}

 The MNR method described above deals with only one coefficient
 $\beta_j$ in each subset regression. 
 In fact, it can be easily extended to conduct joint inference for
 several coefficients. 
 Let $\bA\subset \bV$ denote a set of features for which the joint inference
for the corresponding coefficients is desired. 
Define $\xi_{\bA}=\cup_{j \in \bA} \xi_j$ as the union
 of the Markov blankets of the features in $\bA$.
 Let $\bM=\bA \cup \hat{\xi}_{\bA} \cup \hat{\bS}_*$.
 Then a subset regression can be conducted with the features included in $\bM$.
 For high-dimensional linear regression, if $|\bM|=O(n^{1/2})$, then, similar to Theorem 1, we can show
 $\sqrt{n} (\hat{\bbeta}_A- \bbeta_A) \sim N(0, \sigma^2 \Theta_{AA})$,
 where $\Theta_{AA}$ denotes the submatrix of the precision matrix $\Theta$ constructed
  by its $A$ rows and $A$ columns.
  Similarly, for high-dimensional GLMs, we can show 
  $\sqrt{n} (\hat{\bbeta}_A- \bbeta_A) \sim N(0, K_{AA})$,
  where $K_{AA}$ denotes the submatrix of $K=[ E(b''(\bx^T \bbeta) \bx \bx^T]^{-1}$
  constructed by its $A$ rows and $A$ columns.

\section{Simulation Studies} 
 
 \subsection{A Conceptual Experiment} \label{Sect3.1}
  We first test the concept of MNR using a large-$n$-small-$p$ example; that is, whether  
  the confidence intervals generated by MNR 
  coincide with those generated by the OLS method as the sample size 
  $n$ becomes large. 
  We generated a dataset from the model (\ref{modeleq1}) with $n=2000$ and
 $p$=50, where $\sigma^2$ was set to 1, 
 the covariates $\bX$ were generated from a zero-mean multivariate Gaussian distribution with 
 a Toeplitz covariance matrix given by 
 $\Sigma_{i,j}=0.9^{|i-j|}$ for $i,j=1,\ldots,p$, 
 and the true regression coefficients $(\beta_0,\beta_1,\beta_2,\ldots,\beta_5)=(1,0.2,0.4,-0.3,-0.5,1.0)$ and
 $\beta_6=\cdots=\beta_p=0$.  

 Figure \ref{intervalplot} compares the 95\% confidence intervals of $\beta_1,\ldots,\beta_p$
 produced by MNR and OLS with the simulated dataset.
 For MNR, the nodewise regression algorithm
 (with SIS-Lasso performed for each node)
 was employed for Markov blanket estimation, and SIS-SCAD was employed for variable selection.  
 The SIS-Lasso refers to a variable selection procedure implemented in the package {\it SIS} 
 \citep{SISpackage}, 
 where the sure independence screening (SIS) algorithm \citep{FanLv2008} was first applied for variable screening
 and then the Lasso algorithm was applied to select variables from those survived from the screening procedure.  The SIS-SCAD and SIS-MCP
 can be interpreted in the same way. 
 As expected from Theorem \ref{Them1}, OLS and MNR produced almost identical
 confidence intervals for each regression coefficient.
 In this simulation, we set $n$ excessively large, which ensures
 the convergence of the sample covariance matrix to
 the true covariance matrix $\Theta^{-1}$. 
 In fact, MNR can work well with a smaller value of $n$ as illustrated 
 by the following small $n$-large-$p$ examples.   

\begin{figure}[htbp]
\centering
\begin{center}
\begin{tabular}{c}
\epsfig{figure=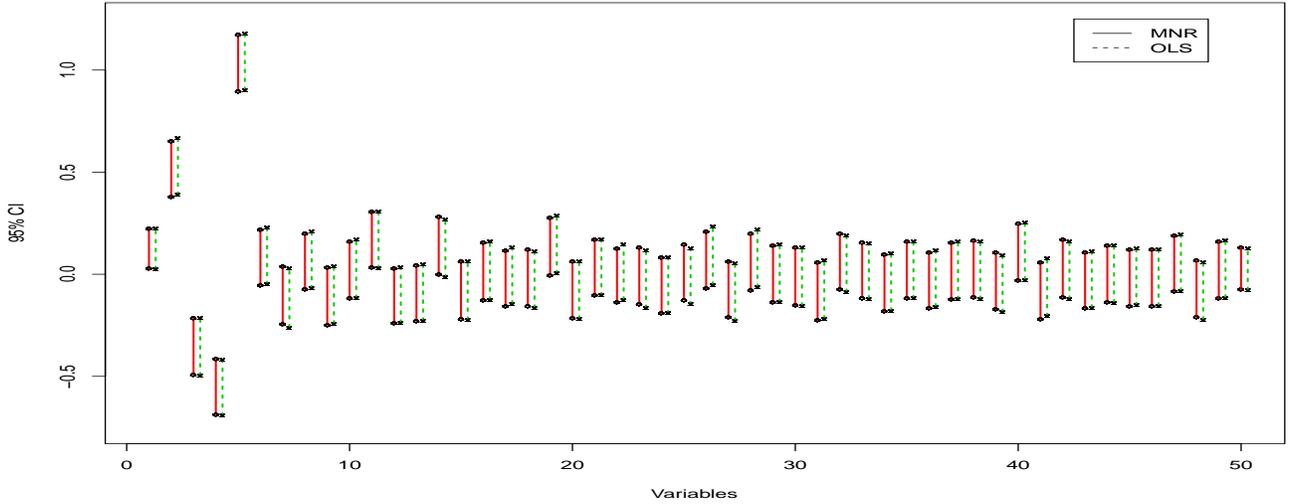,height=7.0in,width=3.0in,angle=270}
\end{tabular}
\end{center}
\caption{ The 95\% confidence intervals of $\beta_1,\ldots,\beta_p$
 produced by the MNR (solid line) and OLS (dashed line) methods for a dataset with $n=2000$ and $p=50$.}
\label{intervalplot}
\end{figure}
  

\subsection{An Illustrative Example} 

To illustrate the performance of MNR, we generated 100 independent datasets from
 the regression (\ref{modeleq1}), where  $n=200$, $p=500$, $\sigma^2=1$,   
 the features were generated from a zero-mean multivariate Gaussian distribution with 
 a Toeplitz covariance matrix given by  
 $\Sigma_{i,j}=0.9^{|i-j|}$ for $i,j=1,\ldots,p$, and  
 the true regression coefficients were given by 
 $(\beta_0,\beta_1,\beta_2,\ldots,\beta_5)=(1,2,4,-3,-5,10)$ and
 $\beta_6=\cdots=\beta_p=0$. We note that the same covariance matrix has 
 been used in \cite{vandeGeer2014} to illustrate the performance of desparsified-Lasso. 
 For convenience, we call this model a Toeplitz-covariance linear regression model.
 
 \subsubsection{Illustration of MNR}
 
 Algorithm \ref{subsetAlg1} was run for this example as in Section \ref{Sect3.1}, i.e., 
 applying SIS-SCAD for variable selection and the nodewise regression algorithm
 for Markov blanket estimation. 
 Table \ref{reg0tab} summarizes the coverage rates and widths
 of the 95\% confidence intervals produced by MNR for each regression coefficient. 
 For the non-zero regression coefficients (denoted by ``signal''), the 
 mean coverage rate and mean width of the confidence intervals are defined, respectively, by 
 \begin{equation} \label{measureeq}
 \bar{p}_{\rm cover} = \sum_{j=1}^{100}\sum_{i \in \bS_*} \hat{p}_i^{(j)}/(100\cdot |\bS_*|), 
 \quad 
 \bar{w}_{\rm CI}=\sum_{j=1}^{100} \sum_{i\in \bS_*} \hat{w}_i^{(j)}/(100\cdot |\bS_*|),
 \end{equation} 
 and their respective standard deviations are defined by 
 \begin{equation} \label{measureeq2}
 \begin{split}
 \sigma(\bar{p}_{\rm cover}) & =\sqrt{\var\{\hat{p}_i^{(j)}: i \in \bS_*, j=1,2,\ldots,100\}/100}, \\
 \sigma(\bar{w}_{\rm CI}) & =\sqrt{\var\{\hat{w}_i^{(j)}: i\in\bS_*, j=1,2,\ldots,100\} /100}, \\
 \end{split}
 \end{equation}
where $\hat{w}_i^{(j)}$ denotes the width of the 95\% confidence interval 
of $\beta_i$ constructed with the $j$th dataset, $\hat{p}_i^{(j)}\in \{0,1\}$ indicates the coverage of $\beta_i$ by the confidence interval, 
and $\var\{\cdot\}$ denotes the variance. By dividing by 100 in (\ref{measureeq2}), 
 the standard deviation represents the variability of the mean value (averaged over 100 independent 
datasets) for a single regression coefficient.
For the zero regression coefficients (denoted by ``noise''), 
the mean coverage rate, the mean width, and their standard deviations can be defined similarly.

 For comparison, we applied the
 desparsified Lasso, ridge projection and multi-split methods to this example. 
 These methods have been implemented in the $R$ package {\it hdi} \citep{Meierhdi2016}.
 The comparison indicates that MNR significantly outperforms the existing methods:
 for both the non-zero and zero regression coefficients, 
 the mean coverage rates produced by MNR are much closer to their nominal level. 
 The reason why desparsified-Lasso suffers from coverage deficiency for non-zero 
 regression coefficients will be explained in Section \ref{LRSect}. 
 
 \begin{table}[htbp] 
\caption{Coverage rates and widths of the 95\% confidence intervals 
produced by MNR with Algorithm \ref{subsetAlg1} for
the Toeplitz-covariance linear regression model,  where ``signal'' and ``noise'' denote non-zero and zero regression coefficients, respectively. For ``signal'', the reported mean value and standard deviation (in the parentheses) are
defined in (\ref{measureeq}) and (\ref{measureeq2}), respectively. For ``noise'', they are defined similarly. }
\vspace{-0.2in}
\label{reg0tab}
\begin{center}
\begin{tabular}{cccccc} \toprule
  Measure   &        &  Desparsified-Lasso  &  Ridge  & Multi-Split   & MNR   \\ \midrule
  & signal  &  0.384(0.049) & 0.576(0.049) &  0.202(0.040) &  0.956(0.021)  \\
 \raisebox{1.5ex}{Coverage} & noise  &  0.965(0.018) & 0.990(0.010) &  1.000(6.4e-4) 
   &  0.950(0.022)  \\ \midrule
  & signal  &   0.673(0.005) & 1.086(0.010) &  2.711(0.097) &  0.822(0.011)  \\
\raisebox{1.5ex}{Width}  & noise  &  0.691(0.005) & 1.143(0.008) &  2.790(0.103) &  0.869(0.007)  \\ \bottomrule
\end{tabular}
\end{center}
\end{table}

As discussed previously, MNR converts the problem of variable selection 
as a multiple hypothesis testing problem. To illustrate the potential of MNR 
in variable selection, we converted the $p$-values produced by the 
subset regressions to z-scores using the inverse 
probability integral transformation 
\begin{equation} \label{zscoreq}
Z_i^{(j)}=\Phi^{-1}(1-q_i^{(j)}), \quad \quad i=1,2,\ldots,p, \quad j=1,2,\ldots,100,
\end{equation}
where $q_i^{(j)}$ denotes the $p$-value calculated via 
subset regression for feature $i$ with dataset $j$, and $\Phi(\cdot)$ denotes the CDF 
of the standard Gaussian distribution. 
Figure S1 (in the Supplementary Material) shows the histogram of the z-scores, which indicates that the true and false features can be
well separated by the z-scores. The empirical Bayesian method developed by \cite{LiangZ2008} was
applied to each of the 100 datasets for simultaneously 
testing the hypotheses (\ref{mult-test}). At a FDR level of $q=0.0001$, which 
is measured by the q-value of \cite{Storey2002}, the method led to exact 
identifications of the true and false features for all 100 datasets, i.e., both the  
false selection rate (FSR) and negative selection rate (FSR) are 0. 
More results were shown in Table \ref{FDRtab}. 
Here the FSR and NSR are defined by 
\[
FSR=\frac{\sum_{j=1}^{100}|\hat{\bS}_{j}\setminus \bS_*|}{\sum_{j=1}^{100}|\hat{\bS}_{j}|}, \quad \quad 
NSR=\frac{\sum_{j=1}^{100}|\bS_*\backslash\hat{\bS}_{j}|}{\sum_{j=1}^{100}|\bS_*|},
\]
where $\bS_*$ is the set of true features, and $\hat{\bS}_{j}$ is the set
of selected features for dataset $j$.
For comparison, SIS-SCAD, SIS-MCP and SIS-Lasso were applied to these datasets for 
performing variable selection under their default settings in the package {\it SIS}. 
Table \ref{FDRtab} shows that MNR can significantly outperform the existing methods 
in high-dimensional variable selection. As mentioned previously,  compared to the existing 
methods, a significant advantage of the MNR-based variable selection 
method is that it controls the FDR of selected features.  

 \begin{table}[htbp] 
\caption{Variable selection for the Toeplitz-covariance linear regression 
with the MNR, SIS-SCAD, SIS-MCP and SIS-Lasso methods.} 
\vspace{-0.2in}
\label{FDRtab}
\begin{center}
\begin{tabular}{ccccccc} \toprule
            &  \multicolumn{3}{c}{MNR} &   &   &  \\ \cline{2-4}  
  \raisebox{1.5ex}{Measure}   &  $q=0.0001$ & $q=0.001$ & $q=0.01$ & 
  \raisebox{1.5ex}{SIS-SCAD}   &  \raisebox{1.5ex}{SIS-MCP}  & \raisebox{1.5ex}{SIS-Lasso} \\ \midrule
  FSR       &   0   &  0.004    &  0.022 &  0.127   &  0.175    & 0.819  \\
  NSR       &   0   &  0   &  0   &  0  &  0  &  0 \\ \bottomrule
\end{tabular}
\end{center}
\vspace{-0.25in}
\end{table}

\subsubsection{Illustration of Joint Inference with MNR}

To illustrate the use of MNR for joint inference, we 
constructed Bonferroni joint confidence intervals based on 
the subset regression for each of the following sets of parameters: 
 $(\beta_1,\beta_2)$, $(\beta_3,\beta_4,\beta_5)$, $(\beta_1,\beta_6)$,
$(\beta_7, \beta_{10})$, and $(\beta_{20},\beta_{200},\beta_{400})$, 
which have covered the cases of combinations of nonzero coefficients, 
combinations of zero and nonzero coefficients, and combinations of zero coefficients.  
For each set of parameters, as described in Section 
\ref{Sect2.3}, the subset regression was constructed by unioning the Markov neighborhoods of 
the corresponding features, and then the 95\% joint confidence intervals
for the set of parameters were constructed 
using the standard Bonferroni method. The Markov neighborhood of each feature was constructed 
as in Section \ref{Sect3.1} using nodewise regression for GGM estimation and 
SIS-SCAD for variable selection. Table \ref{jointtab} summarizes the coverage rates 
of the joint confidence intervals, and it  
indicates that the proposed method works reasonably well for this example. 

 \begin{table}[htbp] 
\caption{Coverage rates of 
 95\% joint confidence intervals produced by MNR 
 for the set of parameters: $(\beta_1,\beta_2)$, $(\beta_3,\beta_4,\beta_5)$, $(\beta_1,\beta_6)$,
$(\beta_7, \beta_{10})$, and $(\beta_{20},\beta_{200},\beta_{400})$, 
 where the number in the parentheses represents the standard deviation 
 of the joint coverage rate averaged over 100 independent datasets. }
 \vspace{-0.2in}
\label{jointtab}
\begin{center}
\begin{tabular}{cccccc} \toprule
 Parameters      &  $(\beta_1,\beta_2)$   & $(\beta_3,\beta_4,\beta_5)$    & $(\beta_1,\beta_6)$   
           & $(\beta_7,\beta_{10})$    & $(\beta_{20},\beta_{200}, \beta_{400})$  \\ \midrule
 Joint coverage rate &  0.97(0.017) & 0.95(0.022) &  0.93(0.026) &  0.97(0.017) & 0.93(0.026) \\ \bottomrule
\end{tabular}
\end{center}
\end{table}

 \subsection{Simulation Studies with More Regression Models}

 \subsubsection{Linear Regression} \label{LRSect}
 
 We simulated 100 independent datasets from
 the linear regression (\ref{modeleq1}) where $n=200$, $p=500$, $\sigma^2=1$, the features were generated from a zero-mean Gaussian distribution with the precision matrix 
 $\Sigma^{-1}=\Theta=(\theta_{ij})$ given by
 \begin{equation}\label{plugin}
   \theta_{ij}=\left\{\begin{array}{ll}
                      0.5,&\textrm{if $\left| j-i \right|=1, i=2,...,(p-1),$}\\
                      0.25,&\textrm{if $\left| j-i \right|=2, i=3,...,(p-2),$}\\
                      1,&\textrm{if $i=j, i=1,...,p,$}\\
                      0,&\textrm{otherwise,}
              \end{array}\right.
\end{equation}
 and the regression coefficients were given by 
 $(\beta_0,\beta_1,\beta_2,\ldots,\beta_5)=(1,2,2.5,3,3.5,4)$ and
 $\beta_6=\cdots=\beta_p=0$.  
 Since the precision matrix has an autoregressive (AR) structure, 
 for convenience, we call this model an AR(2)-precision linear regression model. 
 
 Algorithm \ref{subsetAlg1} was first applied to this example with the numerical 
  results summarized in 
  Table \ref{regtab}, where the $\psi$-learning  algorithm was employed for Markov blanket estimation, and SIS-MCP was employed for variable selection. The $\psi$-learning algorithm has been implemented in the R-package  {\it equSA} \citep{equSApackage}. 
 It provides an equivalent measure of the partial correlation coefficient, 
 the so-called $\psi$-partial correlation coefficient, for estimating Gaussian graphical models under 
 the small-$n$-large-$p$ scenario.
 The algorithm consists of two screening stages. 
 The first stage is correlation screening,  which,   
 via a multiple hypothesis test for correlation coefficients, determines for each feature 
  a conditioning set used for calculating the $\psi$-partial correlation coefficient. 
  The second stage is $\psi$-partial correlation screening, which, via a 
  multiple hypothesis test for $\psi$-partial correlation coefficients, determines 
  the Gaussian graphical model. Corresponding to the two stages, 
  the algorithm consists of two tuning parameters, $\alpha_1$ and $\alpha_2$, which specify 
  the significance levels of the two multiple hypothesis tests, respectively.
  In all applications of the $\psi$-learning algorithm in this paper, we set 
  $\alpha_1=0.1$ and $\alpha_2=0.05$ as suggested by \cite{LiangSQ2015}. 
  In general, $\alpha_1$ should be slightly large to avoid potential loss of 
  important features in the conditioning set of each feature.  
 The nodewise regression algorithm has also been applied to this example for Markov blanket estimation, and the results are similar.

 \begin{table}[!t]
\caption{Coverage rates and widths of the 95\% confidence intervals produced by MNR, desparsified-Lasso, and ridge projection for the AR(2)-precision linear, logistic and Cox regression. Refer to the caption of Table \ref{reg0tab} for the notation.  }
 \vspace{-0.2in}
\label{regtab}
\begin{center}
\begin{tabular}{cccccc} \toprule
Response &    Measure     &                             & Desparsified-Lasso &  Ridge  & MNR \\ \midrule
         & &  signal                     & 0.2300(0.0421)   & 0.3340(0.0447)   & {\bf 0.9500(0.0218)} \\
 & \raisebox{1.5ex}{Coverage} & noise    & 0.9640(0.0186)  &  0.9922(0.0088)  &  {\bf 0.9503(0.0217)} \\ \cline{2-6}
\raisebox{1.5ex}{Gaussian} &       & signal &  0.2810(0.0027) & 0.4481(0.0043)   & 0.2806(0.0022) \\
& \raisebox{1.5ex}{Width}   & noise    &  0.2723(0.0024) & 0.4335(0.0036)   & 0.2814(0.0024) \\ \midrule
         & &  signal                     & 0.004(0.0063)   & 0(0)   & {\bf 0.9320(0.0252)} \\
 & \raisebox{1.5ex}{Coverage} & noise    & 0.9953(0.0068)  &  1.0(4.5e-4)  & {\bf 0.9373(0.0242)} \\ \cline{2-6}
\raisebox{1.5ex}{Binary} & & signal   &  0.6424(0.0101) & 1.0775(0.0110)   & 1.9473(0.0529) \\
& \raisebox{1.5ex}{Width}   & noise    &  0.5782(0.0081) & 1.0100(0.0095)   & 0.9799(0.0132) \\ \midrule
         & &  signal                     & ---  & ---   & {\bf 0.9140(0.0281)} \\
 & \raisebox{1.5ex}{Coverage} & noise    & ---  & ---   & {\bf 0.9354(0.0246)} \\ \cline{2-6}
\raisebox{1.5ex}{Survival} & & signal   &  ---  & ---   & 0.3356(0.0018) \\
& \raisebox{1.5ex}{Width}   & noise    &  ---  & ---   & 0.2683(0.0017) \\ \bottomrule
\end{tabular}
\end{center}
\end{table}

 For comparison, the desparsified Lasso and ridge projection  methods were applied to this example. 
 Both methods have been implemented in the $R$ package {\it hdi} \citep{Meierhdi2016}.
 The multi-split method is also available in {\it hdi}, but it often suffered 
 from a convergence issue in applications to this example and thus not included for comparison. 
 Table \ref{regtab} shows that MNR significantly outperforms the existing methods: 
 The coverage rates produced by MNR are almost identical to their 
  nominal levels for both zero and non-zero regression coefficients; while
 the coverage rates produced by the other methods are far from their 
 nominal levels, especially for the nonzero regression coefficients.

 For the non-zero regression coefficients, the 
 confidence intervals produced by desparsified-Lasso have about the same widths
 as those by MNR, but the former have much lower coverage rates. 
 The coverage deficiency of desparsified-Lasso 
 are due to at least two reasons: (i) the bias-corrected estimator $\hat{\bbeta}_{bc}$ 
 is still biased; and (ii) the required sparsity condition is violated. 
 The bias of $\hat{\bbeta}_{bc}$ can be easily seen from the derivation procedure of 
 $\hat{\bbeta}_{bc}$, which is due to \cite{ZhangZhang2014}.
 Let $Z_j$ denote the residual of the regression $X_j$ versus all other features 
 $\bX[-j]$, and let $P_{jk}=X_k^T Z_j/X_j^T Z_j$. Then
 the following identity holds 
  \begin{equation} \label{identeq}
  \frac{Y'Z_j}{X_j^T Z_j}=\beta_j+\sum_{k\ne j} P_{jk} \beta_k+\frac{\epsilon'Z_j}{X_j^T Z_j},
  \end{equation}
  where $Y$ and $\epsilon$ are as defined in (\ref{modeleq1}). Plugging the Lasso estimator $\hat{\bbeta}_{Lasso}$ (of the 
  regression $Y$ versus $\bX$) into (\ref{identeq}) leads to the bias-corrected estimator
  \begin{equation} \label{bceq1}
  \hat{\beta}_{bc,j}=\frac{Y'Z_j}{X_j^T Z_j} - \sum_{k \ne j} P_{jk} \hat{\beta}_{Lasso,k}
  = \hat{\beta}_{Lasso,j}+Z_j' (Y-\bX \hat{\bbeta}_{Lasso})/Z_j' X_j, \quad j=1,2,\ldots,p,
  \end{equation}
  which is essentially the same with the estimator given in (\ref{dlassoeq0}). Here 
  $\hat{\beta}_{bc,j}$ and $\hat{\beta}_{Lasso,j}$ denote the $j$-th component of
  $\hat{\bbeta}_{bc}$ and $\hat{\bbeta}_{Lasso}$, respectively. 
  The $\hat{\bbeta}_{bc}$ can have the bias of $\hat{\bbeta}_{Lasso}$ much corrected. However, 
   as implied by (\ref{identeq}), $\hat{\bbeta}_{bc}$ 
   is still generally biased because the Lasso estimator
  $\hat{\bbeta}_{Lasso}$ is generally biased.  
  Such a biased estimator shifts the center of the confidence interval and 
  thus leads to the coverage deficiency problem. 
  For the error term $\Delta_n$ defined in (\ref{dlassoeq}), 
  \cite{DezeureBMM2015} proved that it is negligible 
  if the sparsity condition $|\bS_*|=o(\sqrt{n}/\log(p))$ holds, 
  the precision matrix is row-sparse at a level of $o(n/\log(p))$, 
  and some other regularity conditions on the design matrix hold. Among these conditions, 
  the model sparsity condition $|\bS_*|=o(\sqrt{n}/\log(p))$ is a little restrictive and 
   can be easily violated.
  For example, for a problem with $|\bS_*|=5$ and $p=500$, 
  the sample size $n$ should be at least a few thousands to
  satisfy the condition $s_0 \ll \sqrt{n}/\log(p)$.
  As the result, the error term $\Delta_n$ might not be negligible, which can also 
  cause the coverage deficiency issue.  
  Since $\|\hat{\bbeta}_{Lasso}-\bbeta\|_1=O_p(|\bS_*| \sqrt{\log(p)/n})$ \citep{DezeureBMM2015}, violation of the sparsity condition also worsens the bias of $\hat{\bbeta}_{bc}$. 
  We note that the model  sparsity condition  $|\bS_*|=o(\sqrt{n})$ required by MNR  
  is much weaker than $|\bS_*|=o(\sqrt{n}/\log(p))$ under the small-$n$-large-$p$ scenario.  
    
  In our numerical experience, the coverage deficiency of desparsified-Lasso is mainly 
  due to the bias of $\hat{\bbeta}_{bc}$. We illustrate this issue using two 
  simulation studies. The first one is given as follows and the other one is given 
  in Section \ref{equicSect}. In Table \ref{biastable}, we reported 
  the values of $\hat{\bbeta}_{bc, j}$'s, $j=1,2,\ldots,8$, for the
  AR(2)-precision linear regression.  It is easy to see that
  the desparsified-Lasso estimate is severely biased for the nonzero coefficients 
  $\beta_1, \ldots, \beta_5$, which significantly shifts the centers of 
  the resulting confidence intervals and thus leads to the coverage deficiency problem. 
  Note that $\hat{\beta}_{bc,6}$ is also biased 
  due to the strong correlation between $X_6$ and $X_5$. 
  For comparison, we included in Table \ref{biastable} the MNR estimates of  
  these coefficients, which are unbiased for both zero and nonzero coefficients. 
 
\begin{table}[htbp]
\caption{Regression coefficient estimates (averaged over 100 independent datasets) 
produced by MNR and desparsified-Lasso for the AR(2)-precision linear regression 
(with $|\bS_*|=5$, $p=500$ and $n=200$), where the numbers in the parentheses represent 
the standard deviations of the estimates. }
\label{biastable}
\vspace{-0.2in}
\begin{center}
\begin{tabular}{cccccccccc} \toprule
 Method & Measure & $\beta_1$ & $\beta_2$ & $\beta_3$ & $\beta_4$ & $\beta_5$ & $\beta_6$ & $\beta_{7}$ & $\beta_{8}$ \\ \midrule
 --- & true     &  2 & 2.5 & 3 & 3.5 & 4 & 0 & 0 & 0\\ \midrule
   & $\hat{\bbeta}_{bc}$ &  1.841 & 2.274 & 2.698  & 3.270 & 3.849 & -0.051 & -0.007 & 0.016 \\ 
 \raisebox{1.5ex}{desparsified}  &  SD & (0.008) & (0.009) & (0.009) & (0.007) & (0.007)  & (0.006) & (0.007) & (0.007) 
  \\ \midrule
  & $\hat{\bbeta}_{MNR}$ &   1.997 & 2.503 & 2.994 & 3.498 & 4.001 & 0.014 & 0.004  & -0.002 \\
 \raisebox{1.5ex}{MNR}  & SD  &  (0.006) & (0.008) & (0.008) & (0.007) & (0.006) & (0.007) &  (0.008)  & (0.008) \\ \bottomrule
 \end{tabular}
 \end{center}
 \end{table}

 \subsubsection{Logistic Regression}

 We simulated 100 datasets for a logistic regression. 
 For each dataset, we set $n=300$, $p=500$, $(\beta_0,\beta_1,\ldots,\beta_5)=(1,2,2.5,3,3.5,4)$, $\beta_6=\cdots=\beta_p=0$, and generated the covariates from a 
 zero-mean multivariate Gaussian distribution with the precision matrix given by (\ref{plugin}).
 For convenience, we call this model
 an AR(2) precision logistic regression model. Each dataset consisted of 150 case samples and 150 control samples. To alleviate the convergence issues suffered by the GLM estimation procedure 
 {\it glm} in R, we set $n$ slightly large for this example. 
 
 Algorithm \ref{subsetAlg1} was run for the datasets, where the SIS-MCP algorithm was employed
 for variable selection and the $\psi$-learning algorithm was employed for 
 Markov blanket estimation. The nodewise regression algorithm was also applied for 
 Markov blanket estimation, the results were similar. 
 The numerical results were summarized in Table \ref{regtab}, which  
 indicates that MNR significantly outperforms the other methods. 
 Desparsified-Lasso and ridge projection essentially fail for this example.

 \subsubsection{Cox Regression}
 
 For Cox regression, which is also known as Cox proportional-hazards model, 
 we let $\lambda(t)$ denote the hazard rate at time $t$ and let 
 $\lambda_0(t)$ denote the baseline hazard rate. The Cox model can then be expressed as 
\begin{equation} \label{Coxeq}
\lambda(t)=\lambda_0(t) \exp(\beta_1X_1+\beta_2X_2+\ldots+\beta_pX_p). 
\end{equation}
In the simulation, we set $(\beta_1,\ldots,\beta_5)=(1,1,1,1,1)$,  $\beta_6=\cdots=\beta_p=0$,
 the baseline hazard rate $\lambda_0(t)=\lambda_0=0.1$, and the censoring hazard rate $\lambda_c=1$; 
 generated the event time from the Weibull distribution with the shape parameter=1 and the 
 scale parameter=$\lambda_0\exp(-\sum_{i=1}^p X_i \beta_i)$; generated the censoring time 
 from the Weibull distribution with the shape parameter=1 and the scale parameter=$\lambda_c$;
 set the observed survival time as the minimum of the event time and the censoring 
 time for each subject; and generated the features $X_1,\ldots, X_p$ from a zero-mean multivariate 
 normal distribution with the precision matrix given by (\ref{plugin}). For convenience,
 we call this model an AR(2)-precision Cox regression model. We simulated 100 datasets 
 from this model with $n=300$ and $p=500$. 
 
 Algorithm \ref{subsetAlg1} was run for the datasets, where the SIS-Lasso algorithm was used for 
 variable selection and the $\psi$-learning algorithm was used for  
 Markov blanket estimation. The numerical results were summarized in Table \ref{regtab}.
 The nodewise regression algorithm was also applied for 
 Markov blanket estimation, the results were similar. 
 In the MNR results, we can observe some slight bias, which mainly comes
 from the model selection error and the estimation error of the Markov blankets.
 Our numerical experience shows that the nominal level can
 be reached by MNR with the correct model and correct Markov neighborhoods
 or when $n$ becomes large.

 In addition to coverage rates, Table \ref{regtab} reports  
 mean widths of the confidence intervals resulted from different methods.
 For linear regression, the confidence intervals by MNR 
 are narrower than those by ridge projection, and of about the same width as  
 those by desparsified-Lasso.  However, as analyzed previously, 
 desparsified-Lasso often suffers from the coverage deficiency issue. 
 For logistic and Cox regression, the comparison is not meaningful, as the other methods 
 either fail or are not available.  
 
 To explore the potential of MNR in variable selection, 
 we have calculated z-scores in (\ref{zscoreq}) 
 based on the $p$-values generated by MNR for the datasets simulated above.  
 Figures S2-S4 show the histograms of the z-scores,  which indicate that the true and false features can always be well 
 separated by z-scores for all these datasets. This is an attractive feature of MNR and its use 
 for feature selection will be further explored in Section \ref{sect4}. 
 
 \subsection{Robustness of Markov Neighborhood Regression}  \label{equicSect}
 
 This section studies the robustness of MNR to violations of 
  the conditions (\ref{ceq1})-(\ref{ceq3}). 
  This issue has been partially studied in Section 2.2 of the Supplementary Material, 
  where the condition (\ref{ceq2}) is violated when the size of $\hat{\bS}_*$ is restricted to 3. 
  Recall that for the Toeplitz-covariance regression, we have 
  $|\bS_*|=5$, $|\xi_j|=2$ for $j=2,3,\ldots, p-1$, and $|\xi_j|=1$ for $j=1$ and $p$.  
  Therefore, setting $|\hat{\bS}_*| = 3$ leads to some true features missed in
  each subset regression. As shown in Table  S1 (in the Supplementary Material), this results in wider confidence intervals for both zero and nonzero 
   regression coefficients,  although the coverage rates are not much affected.
   
  In what follows, we consider one linear regression example where all features are equally
  correlated with a correlation coefficient of 0.8. The features were generated from 
  a zero-mean multivariate  Gaussian distribution with the covariance matrix given by 
  \begin{equation} \label{equieq}
   \Sigma_{i,j}=0.8, \quad \mbox{for all $i \ne j$}, \quad \Sigma_{i,i}=1 \quad \mbox{for all $i$.} 
  \end{equation}
   We set $p=500$, $n=300$, 
  $(\beta_0,\beta_1,\ldots,\beta_{10})=(1,2,2.5,3,3.5,4,$ $5,6,7,-8,-9)$,
  and $\beta_{11}= \cdots= \beta_p=0$, and generated 100 independent datasets in total. 
  For convenience, we will call this model an equi-correlation linear regression model. 
  The same model has been used in \cite{vandeGeer2014} to illustrate the performance of 
   desparsified-Lasso, but with different sample sizes and regression coefficients.
  For this example, it is easy to see that for each feature $x_j$, the Markov blanket $\xi_j$ 
  consists of all other $p-1$ features. That is, 
   the condition (\ref{ceq1}) is violated, as we always restrict the Markov blanket 
   to be much smaller  than $p$. 
  
  Algorithm \ref{subsetAlg1} was first applied to this example, where SIS-MCP was used for 
  variable selection, and nodewise regression 
   (with SIS-Lasso performed for each 
  node) was used for Markov blanket estimation. All the algorithms were run under their 
  default setting in the R package {\it SIS}.  For comparison,
  desparsified-Lasso and ridge projection methods were also applied to this example. Both 
  methods were run under their default settings in the R package {\it hdi}.  
  The numerical results were summarized in Table \ref{equictab}, which indicates that 
  MNR is pretty robust to the misspecification of the Markov blanket for this example. 
  In terms of mean coverage rates and widths, MNR produced most accurate 
  confidence intervals compared to the desparsified-Lasso and ridge projection methods. 
  
\begin{table}[htbp] 
\caption{Coverage rates and widths of the 95\% confidence intervals 
produced by desparsified-Lasso, ridge projection and MNR (with Algorithm \ref{subsetAlg1}) for
the equi-correlation linear regression model. Refer to the caption of Table \ref{reg0tab} 
for the notation. }
\label{equictab}
\begin{center}
\vspace{-0.2in}
\begin{tabular}{ccccc} \toprule
  Measure  &    ---    &  Desparsified-Lasso  &  Ridge    & MNR   \\ \hline
  & signal   &   0.916(0.028) & 0.973(0.016) & 0.938(0.024) \\
 \raisebox{1.5ex}{Coverage} & noise  &  0.963(0.019) & 0.990(0.010) &  0.951(0.022)  \\ \midrule
  & signal  &   0.656(0.003) & 1.066(0.007) &  0.551(0.003)  \\
\raisebox{1.5ex}{Width}  & noise  &  0.657(0.004) & 1.069(0.007)  &  0.554(0.003)  \\ \bottomrule
\end{tabular}
\end{center}
\vspace{-0.15in}
\end{table}
 
 Compared to the results reported in Table \ref{regtab} for the AR(2)-precision linear regression, desparsified-Lasso works much better for this example. For the  AR(2)-precision linear regression,
 Table \ref{biastable} shows that $\hat{\bbeta}_{bc}$ is 
 severely biased and thus the method suffers from coverage deficiency for nonzero coefficients. 
 To have this issue further explored, we reported 
 in Table \ref{biastable2} $\hat{\bbeta}_{bc}$ and $\hat{\bbeta}_{\rm MNR}$ for 
 the nonzero coefficients $\beta_1,\beta_2,\ldots,\beta_{10}$. 
 The comparison with the true value shows that $\hat{\bbeta}_{bc}$ is nearly unbiased for 
 $\beta_1,\ldots,\beta_8$, although it is systematically smaller than the true value 
 in magnitudes. As the result, desparsified-Lasso produced a good coverage rate for the 
 non-zero coefficients of this example. 
 MNR continuously works well; $\bbeta_{\rm MNR}$ is unbiased and accurate for this example. 
 
\begin{table}[htbp]
\caption{Regression coefficient estimates (averaged over 100 independent datasets) 
produced by MNR and desparsified-Lasso for the equi-correlation linear regression model
(with $|\bS_*|=10$, $p=500$ and $n=300$), where the numbers in the parentheses represent 
the standard deviations of the estimates. }
\label{biastable2}
\begin{center}
\vspace{-0.2in}
\begin{tabular}{cccccccccccc} \toprule
 Method &  & $\beta_1$ & $\beta_2$ & $\beta_3$ & $\beta_4$ & $\beta_5$ & $\beta_6$ & 
  $\beta_{7}$ & $\beta_{8}$ & $\beta_9$ & $\beta_{10}$ \\ \midrule
 --- & true     &  2 & 2.5 & 3 & 3.5 & 4 & 5 & 6 & 7 & -8 & -9 \\ \midrule
   & $\hat{\bbeta}_{bc}$ & 1.96 & 2.43 & 2.98 & 3.47 & 3.96 & 4.97 & 5.95 & 6.95 & -7.87 & -8.84  \\ 
 \raisebox{1.5ex}{desparsified}  &  SD &  (0.02) & (0.02) & (0.02) & (0.02) & (0.02) &  (0.02) & 
   (0.02) &  (0.02) &  (0.02)  & (0.02) \\ \midrule
  & $\hat{\bbeta}_{\rm MNR}$ &  2.01 & 2.49 & 3.00 & 3.50 & 3.99 & 5.02 & 5.98 & 6.99 & -8.01 & -9.01 \\
 \raisebox{1.5ex}{MNR}  & SD  &  (0.01) & (0.01) & (0.02) & (0.01) & (0.02) &  (0.02) & (0.02) & 
  (0.01) &  (0.02) &  (0.02) \\ \bottomrule
 \end{tabular}
 \end{center}
 \end{table}
 
 In summary, MNR is robust to misspecification of the Markov neighborhood. 
 It will perform reasonably 
 well as long as for each subset regression, the Markov neighborhood has covered the major 
 contributors of the subset regression, which include 
 the most significant features to the original  regression 
 as well as the most correlated features to the target feature of the subset regression.

\subsection{Computational Complexity} 

The MNR method consists of three steps, 
variable selection, Markov blanket estimation, and subset regression. Its computational 
complexity is typically dominated by the algorithm used for Markov blanket estimation. 
  
For Algorithm \ref{subsetAlg1},  
if the Lasso algorithm is employed for variable selection, then, by \cite{Meinshausen2007}, the computational complexity of this step is upper bounded by $O(n^3p)$ under the small-$n$-large-$p$ scenario.  Instead of Lasso, this paper employed the SCAD and MCP algorithms for variable selection which have competitive computational complexity with Lasso \citep{Zhang2010}.
If the $\psi$-learning algorithm is used for Markov blanket estimation, then, 
by \cite{LiangSQ2015}, the computational complexity of this step is upper bounded 
by $O(n^3 p^2)$. 
By condition (\ref{ceq3}), the computational complexity of each subset regression is $O(n^2)$ and thus the total computational complexity of the subset regression step is $O(n^2 p)$.
  Therefore, the total computational complexity of MNR is upper bounded by $O(n^3 p^2)$. 
Alternatively, if the nodewise regression algorithm is used for Markov blanket estimation and Lasso is used for the regression on each node/feature, then the computational complexity of
this step is also upper bounded by $O(n^3p^2)$ as there are $p$ features in total. 
In this case, the total computational complexity of MNR is also upper bounded by $O(n^3p^2)$.
If the graphical Lasso is used for Markov blanket estimation, then the total computational complexity of MNR will be $O(p^3)$, as the graphical Lasso has a computational complexity of $O(p^3)$. In a fast implementation of the graphical Lasso algorithm by making use of the block diagonal structure in its solution \citep{WittenFS2011},  the total computational complexity of MNR can be reduced to  $O(p^{2+v})$ for some $0< v \leq 1$.

 
Since desparsified-Lasso employs the nodewise regression algorithm in estimating 
the precision matrix and correcting the bias of $\hat{\bbeta}_{Lasso}$, its 
computational complexity is upper bounded by $O(n^3p^2)$, the same bound as Algorithm \ref{subsetAlg1}.

For a dataset generated from the AR(2)-precision linear regression with $p=500$ and $n=200$,  
Table \ref{timetab} summarized the CPU time cost by different methods when running with  
a single thread on an Intel(R) Xeon(R) CPU E5-2660 v3@2.60GHz machine. 
For MNR, we employed SIS-MCP for variable selection, but different methods for Markov blanket estimation. 
We note that ridge projection and multi-split can be substantially faster than MNR, although they are often inferior to MNR in numerical performance. 
For this example, ridge projection and multi-split took about 3.2 and 3.1 CPU seconds, respectively.

\begin{table}[htbp]
\begin{center}
\caption{CPU times (in seconds) cost by different methods for 
a dataset generated from the AR(2)-precision linear regression with $p=500$ and $n=200$, 
 where MNR$_a$, MNR$_b$, MNR$_c$, and MNR$_d$
 mean that $\psi$-learning, nodewise regression (with SIS-Lasso),
 nodewise regression (with SIS-MCP), and nodewise regression (with SIS-SCAD) were used for Markov blanket estimation, respectively.}
\label{timetab}
\begin{tabular}{cccccc} \toprule
Methods  &  Desparsified-Lasso &    MNR$_a$ & MNR$_b$ & MNR$_c$ & MNR$_d$ \\ \midrule
CPU(s)   &   258               &    152     &  230    & 205  &  250  \\ \bottomrule
\end{tabular}
\end{center}
\end{table}
 
Finally, we note that MNR can be substantially accelerated via parallel computing, 
for which both the Markov blanket estimation and subset regression steps 
can be done in an embarrassingly parallel way.
As described in Section \ref{LRSect}, the $\psi$-learning algorithm consists of two 
screening stages, for which both the correlation coefficients and $\psi$-partial 
correlation coefficients can be calculated in parallel. Refer to \cite{LiangSQ2015} for 
more discussions on this issue. If the nodewise regression algorithm is used for 
Markov blanket estimation, its parallel implementation is obvious. 

\section{Causal Structure Discovery for High-Dimensional Regression}  \label{sect4}

 The causal relationship for a pair or more variables refers to a {\it persistent association}
 which is expected to exist in all situations without being affected by
 the values of other variables. Due to its attractive feature, which does not only allow
 better explanations for past events but also enables better predictions for the future,
 causal discovery has been an essential task in many disciplines.
 Since, for high-dimensional problems, it is difficult and expensive to identify causal relationships
 through intervention experiments, passively observed data has thus become an important source to
 be searched for causal relationships. The challenge of causal discovery from observational
 data lies in the fact that statistical associations detected from observational data
 are not necessarily causal. 

 In statistics, the causal relationship or {\it persistent association} can be determined using
 conditional independence tests. For a large set of variables, a pair of variables
 are considered to have no direct causal relationship if a subset of the remaining variables
 can be found such that conditioning on this subset of variables, the two variables are
 independent. Based on conditional independence tests,
 \cite{SpirtesC2000} proposed the famous PC algorithm
 for learning the structure of causal Bayesian networks.
 Later, \cite{BuhlmannMM2010} extended
 the PC algorithm to high-dimensional variable selection. The extension is called
  the PC-simple algorithm which can be used to search for the causal
 structure around the response variable. Note that the causal structure
 includes all the possible direct causes and effects of the response variable, i.e., all
 the parents and children in the terminology of directed acyclic graphs (DAGs).
 For certain problems, we may be able to determine in logic which are for parents and
 which are for children, although PC-simple cannot tell.
 An alternative algorithm that can be used for local causal discovery is
 the HITON-PC algorithm \citep{Aliferisetal2010}, which is also an extension of the PC algorithm.
 The major issue with the PC-simple and HITON-PC algorithms is with their time complexity.
 For both algorithms, in the worst scenario, i.e., when for each of the $p$ features all conditional
 independence tests of order from 1 to $p-1$ are conducted, the total number of
 conditional tests is $O(p 2^p)$.
 Even under the sparsity constraint,
 the total number of conditional tests can still be of a high order polynomial of $p$.
 See \cite{BuhlmannMM2010} for more discussions on this issue.

 In what follows we describe how the causal structure
 around the response variable can be discovered for high-dimensional regression based on the
 output of MNR.  The proposed algorithm has a much favorable
 computational complexity, which is $O(n^{3/2}p)$ in all scenarios.  
 For Gaussian, binary and proportional-hazards response data, 
 the MNR method can be described one by one as follows. 

 \subsection{Gaussian Response} \label{CausalGaussoian}

  Assume that $\bZ=(Y,\bX)$ jointly follows a multivariate Gaussian distribution $N_{p+1}(0,\Sigma)$.
  To distinguish the notation from that used in previous sections, we let 
  $\bG_z=(\bV_z,\bE_z)$ denote the graph underlying the joint Gaussian distribution.
  Let $\zeta_j=\xi_j \cup \bS_*$, where $\xi_j$ denotes the
  Markov blanket of $X_j$ in the graph $\bG_z$.
  It is easy to see that $\zeta_j$ forms
  a separator of $Y$ and $X_j$ in the graph $\bG_z$.
  Then, under the faithfulness condition for the joint distribution $N_{p+1}(0,\Sigma)$, we can show
  as in \cite{LiangSQ2015} that
 $ \rho(Y, X_j|\bX_{\bV\setminus \{j\}}) \ne 0 \Longleftrightarrow \rho(Y, X_j | \bX_{\zeta_j}) \ne 0$,
  where $\rho(\cdot,\cdot|\cdot)$ denotes the partial correlation coefficient.
  The validity of the faithfulness condition is supported by the Lebesgue
  measure zero argument \citep{Meek1995}; 
 that is, the problems that violate the faithfulness condition usually correspond to some
 particular parameter values that form a zero measure set in the space of all possible parameterizations.
  Further, by the relationship between partial correlations
  and regression coefficients, see e.g., p.436 of \cite{BuhlmannGeerBook2011}, we have
   $\rho(Y, X_j | \bX_{\zeta_j}) \ne 0 
   \Longleftrightarrow \beta_j \ne 0$, where $\beta_j$ is the
   coefficient of $X_j$ in the Markov neighborhood regression $Y \sim X_j+\bX_{\zeta_j}$.
   Therefore, the test for $H_0: \beta_j=0$ versus $H_1: \beta_j \ne 0$
   can be conducted via the Markov neighborhood regression.
   Given the $p$-values of individual tests,
   the causal structure around the response variable $Y$ can be determined
   via a multiple hypothesis test.  

  Since the problem of causal structure discovery is to identify
  a small set of variables that have causal or effect relations with the response variable,
  a simplified version of Algorithm S1 can be used, which avoids to assess the
  effect of all variables on the response.  In the simplified algorithm,
  the Markov blankets only need to be found for the variables
  survived from the variable screening step. The simplified algorithm can be
  described as follows.

\begin{algorithm} (Simplified MNR for Causal Structure Discovery) \label{subsetAlg3}
\begin{itemize}
   \item[(a)] (Variable screening)
    Apply a sure independence screening procedure with $Y$ as
    the response variable and $\bX$ as features,
    to obtain a reduced feature set, $\hat{\bS}_* \subseteq \{1,\dots,p\}$, with
    the size $|\hat{\bS}_*|=O(\sqrt{n}/\log(n))$.

   \item[(b)] (Markov blanket estimation)
    For each variable $X_j \in \hat{\bS}_*$, apply a sure independence screening
    procedure to obtain a reduced neighborhood
   $\hat{\xi}_{j}\subseteq \{1,\dots,p\}$ with the size $|\hat{\xi}_j|=O(\sqrt{n}/\log(n))$.

  \item[(c)] (Subset Regression) For each feature $X_j\in \hat{\bS}_*$, run
  a subset regression with the features given by $\{X_j\} \cup \bX_{\hat{\xi}_j} \cup \bX_{\hat{\bS_*}}$.
  Conduct inference for $\beta_j$, including the estimate, confidence interval and $p$-value,
  based on the output of the subset regression.

 \item[(d)] (Causal Structure Discovery) Conduct a multiple hypothesis test to identify causal features
          based on the $p$-values calculated in step (c).
\end{itemize}
\end{algorithm}

  The consistency of  the algorithm for causal structure identification can be established 
  under slightly modified conditions of Theorem \ref{Them1}. To be more precise, 
  we only need to restate the conditions (A1)-(A4)  for the joint distribution of $(Y,\bX)$,
  and then the proof directly follows Theorem 2 of \cite{LiangSQ2015}.
 It is easy to see that the computational complexity of this algorithm is $O(n^{3/2}p)$, as the 
 computational complexity of the SIS algorithm is $O(np)$ \citep{FanLv2008} and there are 
 a total of $O(\sqrt{n}/\log(n))$ features for which the Markov blanket needs to be estimated. 
 Hence, Algorithm \ref{subsetAlg3} can potentially be much faster than the PC-simple 
 and HITON-PC algorithms, especially when $p$ is large. 
 Again, this algorithm can have many implementations. For example, 
 the SIS algorithm \citep{FanLv2008} can be used for both variable screening 
 and Markov blanket estimation. 
 The HZ-SIS algorithm \citep{XueLiang2017} can also be used for both of them.

  \subsection{Binary Response}

   For binary response data, if we assume that the features $\bX$ follow a Gaussian
   distribution, then a joint distribution of $(Y,\bX)$ can be defined as in \cite{LeeHastie2015},
    for which the conditional distribution of each component of $\bX$
    is Gaussian with a linear regression model, and
    the conditional distribution of $Y$ is a binomial distribution as given
    by a logistic distribution.
    Further, we can assume that the joint distribution is faithful to
    the mixed Graphical model formed  by $(Y,\bX)$ \citep{Meek1995}.
    As pointed out in \cite{LeeHastie2015}, the mixed graphical model is a
   pairwise Markov network and the zero regression coefficients (in the
   nodewise regression) correspond to the conditional independence.
   Therefore, Algorithm \ref{subsetAlg3} is also applicable to the binary response data,
   for which variable screening can be done using the GLM SIS algorithm \citep{FanS2010}.
   Extending the algorithm to multinomial response data is straightforward. The consistency 
   of the approach directly follows from Theorem 2 of \cite{XuJiaLiang2019}, which shows
    the consistency of a conditional independence test based approach for learning 
    mixed graphical models. Following \cite{XuJiaLiang2019}, the consistency of the 
    proposed approach can be established under appropriate conditions 
    such as the faithfulness of the joint distribution of $(Y,\bX)$ with respect to 
    the underlying mixed graphical model, the sparsity of Markov blankets, the sparsity 
    of the true model, and some conditions on generalized linear models. 
    
  \subsection{Proportional-Hazards} \label{Coxsection}

   For Gaussian and binary response data, we justify Algorithm \ref{subsetAlg3}
   for causal structure discovery by presenting $(Y,\bX)$ as an undirected graph for which
   the causal structure around $Y$ contains both direct causes and effects
   of the response variable. Unfortunately, extending this justification to 
   survival data is hard.  
   For survival data, the response variable is proportional hazard, which is non-Gaussian and non-multinomial and thus the joint distribution of $(Y,X)$ 
    is difficult to define with respect to an undirected graph.
   However, this difficulty can be resolved by modeling $(Y,X)$ as 
   a Bayesian network with $Y$ being a child node only. 
   If $Y$ is a child of $X_j$, then
   $\tilde{\zeta}_j= \xi_j \cup \{p+1\} \cup \bS_*$ forms the Markov blanket
   of $X_j$, where $\xi_j$ is the sub-Markov blanket formed with $\bX$
   as implied by the PC algorithm \citep{SpirtesC2000}, $p+1$ is the index of $Y$ (by defining $X_{p+1}=Y$),
   and $\bS_*$ contains all siblings of $X_j$ with respect to the common child $Y$.
   By the total conditioning property shown in \cite{PelletE2008} for Bayesian networks, we have
   \begin{equation} \label{BNeq1}
    X_j \perp Y |\bX_{\tilde{\zeta}_j\setminus \{j,p+1\}} \Longleftrightarrow 
    X_j \perp Y |\bX_{\bV \setminus \{j,p+1\}},
   \end{equation}
    which implies that Algorithm \ref{subsetAlg3} is still valid for survival data.
    However, construction of Bayesian networks for non-Gaussian and non-multinomial and 
    with missing data is beyond the scope of this paper. Therefore, there will be no 
    illustrative examples for this part. 
    In (\ref{BNeq1}), if $\tilde{\zeta}_j$ is replaced by some super Markov blanket
    $\tilde{\zeta}_j' \supset \tilde{\zeta}_j$, the equivalence still holds.

   This justification is very general and can be applied to the Gaussian and
   multinomial response data as well. The only shortcoming is that it assumes that $Y$ can
   only be a child of $\bX$, while this might be too restrictive for the problems considered
   with the Gaussian and multinomial response data.

\section{Real Data Studies}

 This section reports two applications of Algorithm \ref{subsetAlg3},
 one is for identification of anti-cancer drug sensitive genes, and the other is for
 identification of cancer driver genes. 

 \subsection{Identification of Drug Sensitive Genes} 

 Disease heterogeneity is often observed in complex diseases such as cancer.
 For example, molecularly targeted cancer drugs are only
 effective for patients with tumors expressing targets \citep{GrunwaldH2003, Buzdar2009}.
 The disease heterogeneity has directly motivated the development of precision medicine,
 which aims to improve patient care by tailoring optimal therapies to an individual patient according to his/her
 molecular profile and clinical characteristics. 
 Identifying sensitive genes to different drugs
 is an important step toward the goal of precision medicine. 

 To illustrate the MNR method, we considered the cancer cell line encyclopedia (CCLE) dataset, 
  which  is publicly available at {\it www.broadinstitute.org/ccle}.
  The dataset consists of 8-point dose-response curves for 24 chemical compounds across over 400
 cell lines. For different chemical compounds, the numbers of cell lines are
 slightly different. For each cell line, it consists of the expression values  
 of $p=18,988$ genes.  We used the area under the dose-response curve, which was termed
 as activity area in \cite{Barretinaetal2012}, to measure the sensitivity of
 a drug to each cell line. Compared to other measurements, such as $IC_{50}$
 and $EC_{50}$, the activity area could capture the efficacy and potency of the drug
 simultaneously.  An exploratory analysis indicates that treating the activity area 
 as the response of a linear regression with respect to 
 the gene expression values is appropriate.   
 
 Since the purpose of this study is to identify the drug sensitive genes instead of 
 assessing the drug effect for all genes, Algorithm \ref{subsetAlg3} was applied 
 with the HZ-SIS algorithm used for variable screening and 
 Markov blanket estimation. In both steps, we set the neighborhood  
 size to be 40. After getting $p$-values from the subset regressions, 
 the adjusted $p$-values \citep{Holm1979} were calculated,
 and the genes with the adjusted $p$-values less than 0.05 were identified 
 as the drug sensitive genes. 
 For some drugs, if there are no genes identified 
 at this significance level, we just selected 
 one gene with the smallest $p$-value. 
 The results were summarized in Table \ref{drugtab}. 
 For Algorithm \ref{subsetAlg3}, different neighborhood sizes have been tried, the 
 results are similar. 

  For comparison, 
  desparsified Lasso, ridge projection and multi sample-splitting 
  were also applied to this example. As in the MNR method, 
  for each drug, we selected the genes with 
  the adjusted $p$-values less than 0.05 
  as significant; and if there were no genes selected at this significance level, we just reported 
  one gene with the smallest adjusted $p$-value.  
  The results were also summarized in Table \ref{drugtab}.

 Compared to the existing methods, MNR performs reasonably well for this real data example. 
 First of all,
  for all drugs, desparsified Lasso is simply inapplicable due to 
  the ultra-high dimensionality of the dataset; 
  the package {\it hdi} aborted due to the excess of memory limit.  
  Due to the same issue, {\it hdi} also aborted for some drugs when performing 
  ridge regression.  For multi sample-splitting and MNR, it is easy to see that if the same gene 
  is selected by both methods, then the 95\% confidence interval 
  produced by MNR is narrower.

  MNR produced promising results in selection of drug sensitive genes. 
  For example, for both drugs Topotecan and Irinotecan, MNR selected 
  the gene SLFN11 as the top drug sensitive gene.  
  In the literature, \cite{Barretinaetal2012} and \cite{Zoppolietal2012}
   reported that SLFN11 is predictive of treatment response for Topotecan and Irinotecan.
  For drug 17-AAG, MNR selected NQO1 as the top gene; 
  in the literature, \cite{HadleyH2014} and \cite{Barretinaetal2012}
  reported NQO1 as the top predictive biomarker for 17-AAG.
   For drug Paclitaxel, BNN selected BCL2L1 as the top gene. 
   In the literature, many publications, such as \cite{LeeHLYK2016} and
   \cite{Domanetal2016}, reported that the gene BCL2L1 is predictive of treatment response
   for Paclitaxel.
   For drug PF2341066, \cite{LawrenceSalgia2010} reported that HGF, which
 was selected by MNR as the top drug sensitive gene,
 is potentially responsible for the effect of PF2341066.
 For drug LBW242, RIPK1 is selected by MNR. \cite{Gaither2007} and \cite{Moriwaki2015} 
 stated that RIPK1 is one of the presumed target of LBW242, which is involved in increasing death of cells. 
 Finally, we pointed out that the genes selected by MNR have some overlaps with those selected by 
 the multi sample-splitting method, although for the overlapped genes the 95\% confidence intervals produced by MNR tend 
 to be narrower. 


{\small
\begin{center}
\begin{longtable}{ccccc}
\caption{ Comparison of drug sensitive genes selected by desparsified Lasso, ridge projection,
  multi sample-splitting (multi-split) and MNR for 24 anti-cancer drugs, where $^*$ indicates that 
  this gene was significantly selected and the number in the parentheses denotes the
  width of the 95\% confidence interval produced by the method.} 
\label{drugtab} \\ \toprule 
Drug &Desparsified Lasso & Ridge & Multi-Split  & MNR \\ \hline
\endfirsthead

\multicolumn{5}{c}%
{{\bfseries \tablename\ \thetable{} -- continued from previous page}} \\
\hline Drug &Desparsified Lasso & Ridge  & Multi-Split  & MNR \\ \hline
\endhead

\hline \multicolumn{5}{|r|}{{Continued on next page}} \\ \hline
\endfoot
\endlastfoot
17-AAG&\makecell{--}&\makecell{--}&\makecell{NQO1*(0.138)}&\makecell{NQO1*(0.115)}\\\hline 
AEW541&\makecell{--}&\makecell{F3(0.076)}&\makecell{SP1(0.176)}&\makecell{TMEM229B*(0.142)}\\\hline 
AZD0530&\makecell{--}&\makecell{PPY2(0.966)}&\makecell{SYN3(0.705)}&\makecell{DDAH2(0.088)}\\\hline 
AZD6244&\makecell{--}&\makecell{OSBPL3(0.161)}&\makecell{SPRY2*(0.084)\\LYZ*(0.069)\\RNF125*(0.084)}&\makecell{LYZ*(0.048)\\SPRY2*(0.056)}\\\hline 
Erlotinib&\makecell{--}&\makecell{LRRN1(0.102)}&\makecell{PCDHGC3(0.684)}&\makecell{ENPP1(0.123)}\\\hline 
Irinotecan&\makecell{--}&\makecell{SLFN11(0.091)}&\makecell{ARHGAP19*(0.134)\\SLFN11*(0.044)}&\makecell{ARHGAP19*(0.108)\\SLFN11*(0.033)}\\\hline 
L-685458&\makecell{--}&\makecell{--}&\makecell{MSL2(0.2)}&\makecell{FAM129B(0.187)}\\\hline 
Lapatinib&\makecell{--}&\makecell{WDFY4(0.509)}&\makecell{ERBB2*(0.111)}&\makecell{SYTL1(0.062)}\\\hline 
LBW242&\makecell{--}&\makecell{RXFP3(0.86)}&\makecell{LOC100009676(0)}&\makecell{RIPK1(0.221)}\\\hline 
Nilotinib&\makecell{--}&\makecell{--}&\makecell{RAB37(0.187)}&\makecell{RHOC(0.103)}\\\hline 
Nutlin-3&\makecell{--}&\makecell{TTC7B(0.119)}&\makecell{LOC100009676(0)}&\makecell{DNAJB14(0.163)}\\\hline 
Paclitaxel&\makecell{--}&\makecell{ABCB1*(0.229)}&\makecell{ABCB1*(0.183)}&\makecell{BCL2L1*(0.289)}\\\hline 
Panobinostat&\makecell{--}&\makecell{C17orf105(1.104)}&\makecell{PUM2(0.589)}&\makecell{TGFB2(0.103)}\\\hline 
PD-0325901&\makecell{--}&\makecell{ZNF646(0.498)}&\makecell{LYZ*(0.064)\\RNF125*(0.087)}&\makecell{DBN1(0.104)}\\\hline 
PD-0332991&\makecell{--}&\makecell{GRM6(0.719)}&\makecell{LOC100506972(0.569)}&\makecell{PUM2(0.244)}\\\hline 
PF2341066&\makecell{--}&\makecell{WDFY4(0.487)}&\makecell{SPN*(0.124)}&\makecell{HGF*(0.043)\\ENAH*(0.068)\\GHRLOS2*(0.24)}\\\hline 
PHA-665752&\makecell{--}&\makecell{--}&\makecell{LAIR1(0.193)}&\makecell{INHBB(0.039)}\\\hline 
PLX4720&\makecell{--}&\makecell{ADAMTS13(0.692)}&\makecell{SPRYD5*(0.118)}&\makecell{PLEKHH3(0.22)}\\\hline 
RAF265&\makecell{--}&\makecell{LOC100507235(0.748)}&\makecell{SIGLEC9(0.761)}&\makecell{SEPT11*(0.078)}\\\hline 
Sorafenib&\makecell{--}&\makecell{--}&\makecell{SBNO1(0.426)}&\makecell{RPL22*(0.151)\\LAIR1*(0.094)}\\\hline 
TAE684&\makecell{--}&\makecell{--}&\makecell{ARID3A*(0.11)}&\makecell{ARID3A*(0.078)}\\\hline 
TKI258&\makecell{--}&\makecell{--}&\makecell{SPN(0.12)}&\makecell{KHDRBS1(0.251)}\\\hline 
Topotecan&\makecell{--}&\makecell{--}&\makecell{SLFN11*(0.136)}&\makecell{SLFN11*(0.107)}\\\hline 
ZD-6474&\makecell{--}&\makecell{MID1IP1(0.158)}&\makecell{NOD1(0.363)}&\makecell{PXK*(0.066)}\\ \bottomrule
\end{longtable}
\vspace{-0.15in}
\end{center}
}

\subsection{Identification of Cancer Driver Genes} 

We considered the Lymph dataset \citep{HansDW2007}, which consists of $n = 148$ samples 
 with 100 node-negative cases (low risk for breast cancer) and 48 node-positive cases 
(high risk for breast cancer) as our binary response. 
 For each sample, there are $p=4512$ genes that showed evidence of variation above 
 the noise level for further study. This dataset has been analyzed 
 by multiple authors, such as \cite{HansDW2007} and \cite{LiangSY2013}. 

 Algorithm \ref{subsetAlg3} was applied to this dataset, where variable screening 
 was done using the GLM SIS algorithm developed in \cite{FanS2010},  
 and the Markov blanket estimation step 
 was done using the HZ-SIS algorithm \citep{XueLiang2017}.  
 In both steps, we set the neighborhood size to be 5.  For this dataset, 
 MNR selected two genes, RGS3 and ATP6V1F, with the adjusted $p$-value less than 0.05. 
 The details were given in Table \ref{cancertab}, which 
 are consistent with our existing knowledge. 
 For example, RGS3 is known to play a role in modulating the ability of motile lymphoid cells \citep{Bowman1998},
 and to be upregulated in p53-mutated breast cancer tumors \citep{Ooe2007}. 
 ATP6V1F has been reported by many authors in lymph node status studies, see e.g., 
 \cite{HansDW2007} and \cite{Dobra2009}. 
 For comparison, desparsified Lasso and ridge projection methods 
 were also applied to this example. As aforementioned, the multi sample-splitting algorithm is 
 not yet available for logistic regression. 
 Both desparsified Lasso and ridge projection selected only the gene RGS3 as the cancer driver gene. 
 
 A closer look at Table \ref{cancertab} shows that MNR outperforms 
 desparsified Lasso and ridge projection for this example. This can be explained from two perspectives. 
 First, MNR is the only method that identifies RGS3 as a cancer driver gene at an acceptable 
 significance level. While  
 the desparsified Lasso and ridge projection can only identify that RGS3 has a smaller adjusted $p$-value 
 than other genes, and its adjusted $p$-value is greater than 0.05.    
 Second, for the gene RGS3, the 95\% confidence interval produced by MNR is narrower than 
 that produced by desparsified Lasso. Moreover, the 95\% confidence interval 
 produced by ridge projection even contains 0 and is thus less significant.  

\begin{table}
\selectfont
\begin{center}
\caption{Comparison of the cancer driver genes selected by the MNR, desparsified Lasso and ridge projection methods 
 for the Lymph dataset, where $^*$ indicates that
  this gene was significantly selected. }
\vspace{-5mm}
\label{cancertab}
\begin{tabular}{ccccccc} \\ \toprule
 & Desparsified Lasso & &  Ridge & &  \multicolumn{2}{c}{MNR}  \\ \cline{2-2}  \cline{4-4} \cline{6-7}
 Gene & RGS3 & & RGS3 & & RGS3* & ATP6V1F*  \\ 
 95\%C.I. &  (1.145,5.748)& &  (-0.251,2.249)& & (0.859,5.178) & (2.073,7.131)\\ 
 Width & 4.603 &&  2.500 & & 4.319 & 5.058\\ \bottomrule
\end{tabular}
\vspace{-0.15in}
\end{center}
\end{table}

\section{Discussion} 

 This paper has proposed the MNR method 
 for constructing confidence intervals and assessing 
 $p$-values for high-dimensional regression. The MNR method 
 has successfully broken the high-dimensional inference problem into
 a series of low-dimensional inference problems based on 
 conditional independence relations among different variables.  
 The embarrassingly parallel structure 
 of the MNR method, where the Markov blanket,
 confidence interval and $p$-value can be calculated for each variable in parallel, 
 enables it potentially to be run very fast on multicore computers. 
 The MNR method has been tested on high-dimensional linear,
 logistic and Cox regression. The numerical results indicate that the MNR method
 significantly outperforms the existing ones.
 The MNR method has been applied to learn causal structures for
 high-dimensional linear models with the real data examples
 for identification of drug sensitive genes and cancer driver genes presented.

 This paper has assumed that the features 
 are Gaussian. Extension of the MNR method to non-Gaussian features  
 is straightforward. In this case, the conditional independence relations 
 among the features can be figured out using Bayesian networks 
 based on the concept of Markov blanket as described in Section \ref{Coxsection}. 
 The theory developed in Section 2 will still hold. 
 The idea of using conditional independence relations for dimension
 reduction is general and potentially can be extended to
 other high-dimensional or big data problems as well.
 
 Finally, we note that the performance of the MNR method relies on the algorithms used for
 variable selection and Markov blanket estimation, while each of these algorithms 
 can rely on a non-trivial amount of tuning. A sub-optimal performance of these 
 algorithms may adversely affect the performance of the MNR method, for example, the resulting 
 confidence intervals can be wider. 



\section*{Acknowledgments}
The authors thank the editor, associate editor and two referees for their encouraging and 
constructive comments which 
have led to significant improvement of this paper. 

\vspace{10mm}

\appendix 

\noindent {\Large \bf Appendix}

\section{Conditions for Theorem \ref{Them1}}

\begin{itemize} 
        \item[(A0)] The dimension $p_{n}=O(\exp(n^{\delta}))$ for some constant $0\leq \delta<1/2$.
        \item[(A1)] The distribution $P_{\bX}$ is multivariate Gaussian, and it  
           satisfies the Markov property and adjacency faithfulness condition 
           with respect to the undirected underlying graph $\bG$.
        \item[(A2)] The correlation satisfy $\min \{|r_{ij}|;e_{ij}=1, i,j=1,\dots,p_{n}, i\neq j \}\geq c_{0}n^{-\kappa}$
         for some constants $c_{0}>0$ and $0<\kappa<(1-\delta)/2$, and
        $\max \{|r_{ij}|; i,j=1,\dots,p_{n}, i\neq j \}\leq M_{r} < 1$ for some constants $0<M_{r}<1$.
        \item[(A3)] There exists constants $c_{1}>0$, $0<\kappa' \leq \kappa$, and $0\leq \tau < 1-2\kappa'$
              such that $\lambda_{\max}(\Sigma)\leq c_{1}n^{\tau}$.
        \item [(A4)] The $\psi$-partial correlation coefficients satisfy
        $\inf\{\psi_{ij}:\psi_{ij}\neq 0, i,j=1,\dots,p_{n}, i\neq j, |S_{ij}|\leq q_{n}\} \geq c_{2}n^{-d}$,
        where $q_{n}=O(n^{2\kappa'+\tau}), 0<c_{2}<\infty$, $0<d<(1-\delta)/2$ are some constants, 
        and $S_{ij}$ denotes the conditioning set used in calculating $\psi_{ij}$.
         In addition, $\sup\{\psi_{ij}: i,j=1,\dots,p_{n}, i\neq j, |S_{ij}|\leq q_{n}\} \geq c_{6}n^{-d}\leq M_{\psi}<1$
        for some constants $0<M_{\psi}<1$.
        \item [(A5)] $\max_{j=1,\dots,p}|\xi_{j}|=o(n^{1/2})$.
        \item [(A6)] There exist constants $c_{3}>0$  and $c_{4}>0$ such that
        $\min _{j\in \bS_*}|\beta_{j}|\geq c_{3}n^{-\kappa}$ and
        $\min_{j\in \bS_*}|cov(\beta_{j}^{-1}y,x^{(j)})|\geq c_{4}$.

        \item[(A7)] $|\bS_*|=o(n^{1/3})$.
        \item[(A8)] Other assumptions in Theorem 2 of \cite{Fan2004}.
        \item[(A9)] Other assumptions in Theorem 1 of \cite{FanGH2012} (for the case of random design).
\end{itemize}

\newpage

\section{Supplementary Materials} 

\setcounter{table}{0}
\renewcommand{\thetable}{S\arabic{table}}
\setcounter{figure}{0}
\renewcommand{\thefigure}{S\arabic{figure}}
\setcounter{equation}{0}
\renewcommand{\theequation}{S\arabic{equation}}
\setcounter{algorithm}{0}
\renewcommand{\thealgorithm}{S\arabic{algorithm}}
\setcounter{lemma}{0}
\renewcommand{\thelemma}{S\arabic{lemma}}
\setcounter{theorem}{0}
\renewcommand{\thetheorem}{S\arabic{theorem}}
\setcounter{remark}{0}
\renewcommand{\theremark}{S\arabic{remark}}

 This material is organized as follows. Section \ref{SMsect1} gives the proofs of Lemma 1, Lemma 2 and Theorem 1. Section \ref{SMsect2} presents a variable screening-based MNR method, justifies its validity, and illustrates its performance using a numerical example. 
 Section \ref{SMsect3} presents some figures which illustrate the performance of the MNR method in variable selection.

\subsection{Proofs for the Validity of Algorithm 1}
\label{SMsect1}

\subsubsection{Proof of Lemma 1}

\begin{proof}
 Without loss of generality, we let $j=1$ and $\{j\}\cup\hat{\xi}_{j}\cup \hat{\bS}_*=\{1,\dots,d\}=D_j$. 
 Let $u^{(i)}=y^{(i)}-(\bx_{D_j}^{(i)})^{T}\boldsymbol{\beta}_{D_j}^{\ast}$, then 
\[
\sqrt{n}(\hat{\bbeta}_{D_j}-\bbeta_{D_j})=\bigg[\sum_{i=1}^{n}\bx^{(i)}_{D_j}(\bx^{(i)}_{D_j})^{T}\bigg]^{-1}
 \bigg[\sqrt{n}\sum_{i=1}^{n}\bx^{(i)}_{D_j}u_{i}\bigg]
 =\bigg[\frac{1}{n} \sum_{i=1}^{n}\bx^{(i)}_{D_j}(\bx^{(i)}_{D_j})^{T}  \bigg]^{-1}
  \bigg[\frac{1}{\sqrt{n}}\sum_{i=1}^{n}\bx^{(i)}_{D_j}u_{i}\bigg].
\]
Since $\bS_*\subseteq \hat{\bS}_* \subseteq \{1,\dots,d\}$, it's easy to verify
$E(\bx^{(i)}_{D_j}u_{i})=E_{\bx^{(i)}_{D_j}}E(\bx^{(i)}_{D_j}u_{i}|\bx^{(i)}_{D_j})=\mathbf{0}$. In addition,
\[
Cov(\bx^{(i)}_{D_j}u_{i})=E_{\bx^{(i)}_{D_j}}Cov(\bx^{(i)}_{D_j}u_{i}|\bx^{(i)}_{D_j})
 +Cov_{\bx^{(i)}_{D_j}}E(\bx^{(i)}_{D_j}u_{i}|
  \bx^{(i)}_{D_j})=\sigma^{2}\Sigma_{d},
\]
where $\Sigma_{d}$ denotes the upper left $d\times d$ submatrix of the covariance matrix $\Sigma$ (of $\bX$). 

When $|D_j|=d=o(n^{1/2})$, by Theorem 1.1 of \cite{Portnoy1986}, 
\[
\frac{1}{\sqrt{n}}\sum_{i=1}^{n}\bx^{(i)}_{D_j}u_{i} \xrightarrow{d} N(0,\sigma^2 \Sigma_{d}).
\]
By the weak law of large numbers, 
$\frac{1}{n} \sum_{i=1}^{n}\bx^{(i)}_{D_j}(\bx^{(i)}_{D_j})^{T} \xrightarrow{p} \Sigma_{d}$.
Further, by continuous mapping theorem, 
\begin{equation} \label{esttheta}
\bigg[\frac{1}{n} \sum_{i=1}^{n}\bx^{(i)}_{D_j}(\bx^{(i)}_{D_j})^{T} \bigg]^{-1} \xrightarrow{p} \Sigma_{d}^{-1}.
\end{equation}
Combining them together, we have
$\sqrt{n}(\hat{\bbeta}_{D_j}-\bbeta_{D_j}) \xrightarrow{d} N(0,\sigma^{2}\Sigma_{d}^{-1})$. 
 Therefore $\sqrt{n}(\hat{\beta}_{1}-\beta_{1}) \xrightarrow{d} \sigma^{2}\theta_{d,11}$, 
 where $\theta_{d,11}$ denotes the (1,1)th entry of the matrix $\Sigma_{d}^{-1}$.   
 As shown around the equations (4) and (5) in the main text,
 $\theta_{d,11}=\theta_{11}$, which completes the proof of part (i). 

  Part (i) implies that $\sqrt{n} \frac{\hat{\beta}_j-\beta_j}{\sqrt{ {\sigma}^2 {\theta}_{jj}}} 
  \sim N(0,1)$ as $n\to \infty$.
  Since $\hat{\bS}_* \supseteq \bS_*$ and $|D_j|=o(n^{1/2})$,  
  $\hat{\sigma}_n^2$ is a consistent estimator of $\sigma^2$.
  By (\ref{esttheta}), $\hat{\theta}_{11}$ forms a consistent estimator of $\theta_{11}$.
  Putting all these together, we have
  $\sqrt{n} \frac{\hat{\beta}_j-\beta_j}{\sqrt{ \hat{\sigma}_n^2 \hat{\theta}_{jj}}} \sim N(0,1)$ 
  as $n\to \infty$.
 \end{proof}

 \subsubsection{Proof of Theorem 1} 
 
 \begin{proof} 
 With conditions (A0)-(A4), we can apply Theorem 2 of \cite{LiangSQ2015} to
 prove that $P(\hat{\xi}_{j}=\xi_{j})=1-o(1)$ for all $j=1,\dots,p_{n}$.
 Similarly, with conditions (A0), (A3) and (A6)-(A8), we can apply Theorem 5 of \cite{FanLv2008}
 to prove that $P(\hat{\bS}_*=\bS_*)=1-o(1)$.
 With conditions (A5) and (A7), we have $P(|D_j|=|\{j\}\cup\hat{\xi}_{j}\cup \hat{\bS}_*|=o(n^{1/2}))=1-o(1)$.
 Following from Lemma 1, we can derive the asymptotic distribution of $\hat{\beta}_{j}$  as expressed
 in part (i) of Lemma 1.
 By condition (A9), we can get the consistency of $\hat{\sigma}_n^2$
 based on Theorem 1 of \cite{FanGH2012} and the asymptotic $P(|D_j|\log(p_n)=o(n))=1-o(1)$.   
 Finally, the proof can be concluded based on Slutsky's theorem.  
\end{proof}

\subsubsection{Proof of Lemma 2}

 
\begin{proof}
Without loss of generality, we let $j=1$ and $\{j\}\cup\hat{\xi}_{j}\cup \hat{\bS}_*=\{1,\dots,d\}=D_j$.
Since $|D_j|=o(n^{1/2})$ and $\hat{\bS}_* \supseteq \bS_*$ hold, by the theory of 
 GLM estimation with increasing dimensions \citep{Portnoy1988}, 
 $\sqrt{n}(\hat{\beta}_{1}-\beta_{1}) \xrightarrow{d} N(0,k_{d,11})$,
 where $k_{d,11}$ denotes the $(1,1)$-th entry of the inverse of the Fisher information matrix $I_{d}^{-1}$ 
 for the subset GLM. 

 In order to show $k_{d,11}=k_{11}$, we first prove a useful proposition: 
 {\it For any two features $X_i$ and $X_j$, if at least one of them is not in $\bS_*$, 
 then $k_{ij}=0 \Leftrightarrow \theta_{ij}=0$, where $k_{ij}$ of the $(i,j)$th entry of $K$
 and $\theta_{ij}$ is the $(i,j)$th entry of the precision matrix $\Theta=\Sigma^{-1}$. }
 Without loss of generality, we assume that $X_i \notin \bS_*$. 
 Rewrite the Fisher information matrix $I=E(b''(\bx^{T}\bbeta)\bx\bx^{T})$ 
 as $E(\bz\bz^{T})$, where $\bz=\sqrt{b''(\bx^{T}\bbeta)}\bx=\sqrt{b''(\bx_{\bS_*}^{T}\bbeta_{\bS_*})}\bx$.
 Therefore, $K=I^{-1}$ measures the partial correlation structure of $\bz$, and $k_{ij}=0$ 
 if and only if the partial correlation of $\bz_{i}$ and $\bz_{j}$ is zero. Further,
 recall that the partial correlation of  $\bz_i$ and $\bz_j$  equals zero if and only 
 if the coefficient $\tilde{c}_{ij}$ in the nodewise regression $\bz_{i}=\sum_{t \neq i}\tilde{c}_{it} \bz_{t}$ 
 equals zero, where $\{\tilde{c}_{it}: t\neq i \}$ satisfies
 \[
\{\tilde{c}_{it}: t\neq i  \}=\arg\min_{\{c_{it}: t\neq i  \}} E[\bz_{i}-\sum_{t \neq i}c_{it}\bz_{t}]^2.
 \]
 Note that for any set of Gaussian variables $\{ \bx_1,\ldots, \bx_{p_n}\}$,
\[
\{\tilde{c'}_{it}: t\neq i  \}=\arg\min_{\{c'_{it}: t\neq i  \}} E[\bx_{i}-\sum_{t \neq i}c'_{it}\bx_{t}]^2
 =\arg\min_{\{c'_{it}: t\neq i  \}} E\bigg\{[\bx_i-\sum_{t \neq i}c'_{it}\bx_{t}]^2\bigg| 
  \{\bx_1,\ldots,\bx_{p_n}\} \setminus \bx_{i}\bigg\},
 \] 
 as $E[\bx_{i}-\sum_{t \neq i}c'_{it}\bx_{t}]^2$ is minimized if and only if 
 $\sum_{t \neq i}c'_{it}\bx_{t}=E[\bx_i|  \{\bx_1,\ldots,\bx_{p_n}\} \setminus \bx_{i} ]$. 

 Since we have assumed $\bS_* \subseteq \{t:t\neq i\}$, it is easy to verify that
\begin{eqnarray*}
&&\arg\min_{\{c_{it}: t\neq i  \}} E\bigg\{[\bz_{i}-\sum_{t \neq i}c_{it} \bz_{t}]^2\bigg| 
 \{\bz_1,\ldots,\bz_{p_n}\} \setminus \bz_{i}\bigg\}\\
&=&\arg\min_{\{c_{it}: t\neq i  \}} E\bigg\{[\sqrt{b''(\bx^{T}_{\bS_*}\bbeta_{\bS_*})}\bx_{i}
  -\sum_{t \neq i}c_{it}\sqrt{b''(\bx^{T}_{\bS_*}\bbeta_{\bS_*})}\bx_{t}]^2\bigg| 
  \{\bx_1,\ldots,\bx_{p_n}\} \setminus \bx_{i}\bigg\}\\
&=&\sqrt{b''(\bx^{T}_{\bS_*}\bbeta_{\bS_*})}\arg\min_{\{c_{it}: t\neq i  \}} 
 E\bigg\{[\bx_i-\sum_{t \neq i}c_{it} \bx_t]^2\bigg| \{\bx_1,\ldots,\bx_{p_n}\} \setminus \bx_i\bigg\}
 =\{\tilde{c'}_{it}: t\neq i  \}.
\end{eqnarray*}
Since this holds for any fixed $\{\bz_1,\ldots,\bz_{p_n}\} \setminus \bz_{i}$, we further have
\[
\{\tilde{c}_{it}: t\neq i  \}=\arg\min_{\{c_{it}: t\neq i  \}} E[\bz_{i}-\sum_{t \neq i}c_{it}\bz_{t}]^2
   =\arg\min_{\{c_{it}: t\neq i  \}} E\bigg\{[\bz_{i}-\sum_{t \neq i}c_{it}\bz_{t}]^2\bigg| 
   \{\bz_1,\ldots,\bz_{p_n}\} \setminus \bz_{i} \bigg\}=\{\tilde{c'}_{it}: t\neq i  \}.
 \]
Putting these together, we have 
$k_{ij}=0\Leftrightarrow \tilde{c}_{ij}=0 \Leftrightarrow \tilde{c'}_{ij}=0 
 \Leftrightarrow \theta_{ij}=0$,
which justifies the proposition.

Next, we partition $K$ as
\[
K=\begin{bmatrix}
K_{d}     & K_{d,p-d} \\
K_{p-d,d}       & K_{p-d}  
\end{bmatrix}.
\]
By the formula of partitioned matrix inverse, 
 we have $I_{d}=(K_{d}-K_{d,p-d}K_{p-d}^{-1}K_{p-d,d})^{-1}$ and $I_{d}^{-1}=K_{d}-K_{d,p-d}K_{p-d}^{-1}K_{p-d,d}$.
 Since the true features set $\bS_*$ has been included in 
 $\{1,\dots,d\}$, we can apply the previous proposition to obtain that $k_{1j}=0\Leftrightarrow \theta_{1j}=0$ 
 for $j>d$.  Since the Markov neighborhood of $\bx_1$ has been included in $\{1,\dots,d\}$, 
 we have $\theta_{1j}=0$ for $j>q$. Combining these two together, 
 we have $k_{1j}=0$ for $j>q$; in other words, the first row of $K_{d,p-d}$ and first column of $K_{p-d,d}$ are exactly 
 zero. Therefore, the $(1,1)$-th element of  $K_{d,p-d}K_{p-d}^{-1}K_{p-d,d}$ is also exactly zero, 
  and the $(1,1)$-th entry of $K_{d}$ equals to the $(1,1)$-th entry of $I_{d}^{-1}$, i.e.,
 $k_{11}=k_{d,11}$. This completes the proof of part (i). 

 Since for the GLMs, including the logistic, Poisson and Cox regression, 
 $J_n(\bbeta)$ is almost surely Lipschitz continuous, $J_n(\bbeta)$ is 
 stochastically equicontinuous.  Therefore,
\[
 |J_n(\hat{\bbeta}_{D_j}) -I(\bbeta_{D_j}) | \leq |J_n(\hat{\bbeta}_{D_j})-J_n(\bbeta_{D_j})|
 +|J_n(\bbeta_{D_j})-I(\bbeta_{D_j})| 
 \stackrel{p}{\to} 0,
\]
 where $|J_n(\hat{\bbeta}_{D_j})-J_n(\bbeta_{D_j})|\stackrel{p}{\to} 0$ follows from 
 the stochastic equicontinuity of $J_n(\cdot)$ and the consistency of the MLE $\hat{\bbeta}_{D_j}$, and 
 $|J_n(\bbeta_{D_j})-I(\bbeta_{D_j})|\stackrel{p}{\to} 0$ follows from the weak law of large numbers.  
 This completes the proof of part (ii).
\end{proof}

\subsection{A Variable Screening-Based MNR Algorithm}
\label{SMsect2}

\subsubsection{The Algorithm}

\begin{algorithm} (Variable screening-based Markov neighborhood regression) \label{subsetAlg2}
\begin{itemize}

 \item[(a)] (Variable screening) 
    Apply the sure independence screening procedure \citep{FanLv2008} 
    to obtain a reduced feature set, $\hat{\bS}_* \subseteq \{1,\dots,p\}$, with 
    the size $|\hat{\bS}_*|=O(\sqrt{n}/\log(n))$.

    \item[(b)] (Markov blanket estimation) 
       Apply the correlation screening procedure \citep{Song2014} 
       to $\bX$ to obtain a reduced neighborhood 
       $\hat{\xi}_{j}\subseteq \{1,\dots,p\}$ for each variable $X_j$ 
       with the size $|\hat{\xi}_j|=O(\sqrt{n}/\log(n))$.

        \item[(c)] (Subset Regression) For each variable $X_j$, $j=1,\ldots,p$, run the OLS regression 
  with the features given by $\{X_j\} \cup \bX_{\hat{\xi}_j} \cup \bX_{\hat{\bS}_*}$. 
  Conduct inference for $\beta_j$, including the estimate, confidence interval and $p$-value,
  based on the output of the subset regression. 
\end{itemize}
\end{algorithm} 

The variable screening step restricts the size of $\hat{\bS}_*$ to $O(\sqrt{n}/\log(n))$, which looks more restrictive than
the order $O(n/\log(n))$ used by conventional variable screening algorithms. However, this is just a technical condition
and will not affect much on the the actual size of $\hat{\bS}_*$. 
For the step of Markov blanket estimation, this is similar. 
To justify the validity of Algorithm \ref{subsetAlg2}, we establish Theorem \ref{Them2} with part of the conditions given as follows: 

\begin{enumerate}
       \item[(B1)] The distribution $P_{\bX}$ is multivariate Gaussian,
         and it satisfies the Markov property
         with respect to the undirected graph $\bG$ for all size of $\bX$.

       \item[(B2)] The correlation coefficients satisfy
         $\min \{|r_{ij}|;e_{ij}=1, i,j=1,\dots,p_{n}, i\neq j \}\geq c_{0}n^{-\kappa}$
        for some constants $c_{0}>0$ and $0<\kappa<1/4$, and
        $\max \{|r_{ij}|; i,j=1,\dots,p_{n}, i\neq j \}\leq M_{r} < 1$ for some constants $0<M_{r}<1$.

        \item[(B3)] There exists constants $c_{1}>0$ and $0\leq \tau < \frac{1}{2}-2\kappa$
                      such that $\lambda_{\max}(\Sigma)\leq c_{1}n^{\tau}$.
\end{enumerate}

 \begin{theorem} \label{Them2} (Validity of Algorithm \ref{subsetAlg2})
   If the conditions (A0), (A6), (A9) (given in the Appendix of the main text), and (B1)-(B3) hold, 
   the sure independence screening (SIS) algorithm \citep{FanLv2008}
   is used for variable selection in step (a),  and 
   the correlation screening algorithm \citep{Song2014} is used for estimating $\xi_j$'s 
   in step (b),  
   then for each $j \in \{1,2,\ldots,p_n\}$, 
  $\sqrt{n} \frac{\hat{\beta}_j-\beta_j}{\sqrt{ \hat{\sigma}_n^2 \hat{\theta}_{jj}}}  \sim N(0,1)$ as $n\to\infty$, where $\hat{\beta}_j$ denotes the estimate of $\beta_j$ obtained 
   from the subset regression,
   $\hat{\sigma}_n^2=(\by-\bx_{D_j}\hat{\bbeta}_{D_j})^T(\by-\bx_{D_j}\hat{\bbeta}_{D_j})/(n-d-1)$,
  $\hat{\theta}_{jj}$ is the $(j,j)$-th entry of
  the matrix $\bigg[\frac{1}{n} \sum_{i=1}^{n}\bx^{(i)}_{D_j}(\bx^{(i)}_{D_j})^{T} \bigg]^{-1}$, and
  $\bx^{(i)}_{D_j}$ denotes the $i$-th row of $\bX_{D_j}$.
\end{theorem}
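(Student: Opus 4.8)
The plan is to mirror the proof of Theorem~\ref{Them1}, replacing its two \emph{consistency} statements (for variable selection and for Markov blanket estimation) by the weaker \emph{sure screening} statements, which is all that Lemma~\ref{lemma01} actually uses. It suffices to exhibit an event of probability $1-o(1)$ on which the three hypotheses of Lemma~\ref{lemma01} hold: (i) $\hat{\bS}_* \supseteq \bS_*$; (ii) $\hat{\xi}_j \supseteq \xi_j$; and (iii) $|D_j|=o(n^{1/2})$, with $D_j=\{j\}\cup\hat{\xi}_j\cup\hat{\bS}_*$. Given these, part (ii) of Lemma~\ref{lemma01} yields the stated studentized central limit theorem on that event, and Slutsky's theorem --- exactly as in the proof of Theorem~\ref{Them1} --- upgrades it to the unconditional statement for each fixed $j$.

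For (i): step (a) of Algorithm~\ref{subsetAlg2} runs SIS \citep{FanLv2008}. Under the dimension bound (A0), the Gaussianity of the features (B1), the eigenvalue control (B3), and the signal condition (A6) --- whose second part, $\min_{j\in\bS_*}|\cov(\beta_j^{-1}y,x^{(j)})|\ge c_4$, is exactly the marginal-signal requirement of SIS --- Theorem~1 of \cite{FanLv2008} gives $P(\bS_*\subseteq\hat{\bS}_*)=1-o(1)$; a slight adjustment of constants absorbs the smaller retained size $O(\sqrt{n}/\log n)$ imposed by the algorithm, which is still large enough to contain $\bS_*$ under the sparsity implied by (A6) together with (B1)--(B3). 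For (ii): step (b) runs the correlation screening procedure of \cite{Song2014} on $\bX$. Under (B1)--(B3), and in particular the lower bound (B2) on the marginal correlations $|r_{ij}|$ along every edge $e_{ij}=1$ --- which guarantees that each graph neighborhood is detectable from marginal correlations without invoking faithfulness, and which, together with the eigenvalue bound (B3), caps $\max_j|\xi_j|$ at $o(\sqrt{n}/\log n)$ --- the analysis of \cite{Song2014} shows that, with the threshold chosen so that $|\hat{\xi}_j|=O(\sqrt{n}/\log n)$, $P\bigl(\xi_j\subseteq\hat{\xi}_j\text{ for all }j=1,\dots,p_n\bigr)=1-o(1)$. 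Claim (iii) is then deterministic: $|D_j|\le 1+|\hat{\xi}_j|+|\hat{\bS}_*|=O(\sqrt{n}/\log n)=o(n^{1/2})$.

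To finish, intersect the events in (i) and (ii), which still has probability $1-o(1)$, and apply Lemma~\ref{lemma01}(ii) verbatim; the consistency of $\hat{\sigma}_n^2$ used there is additionally underpinned, as in the proof of Theorem~\ref{Them1}, by condition (A9) and Theorem~1 of \cite{FanGH2012} together with $|D_j|\log p_n=o(n)$. Slutsky's theorem then concludes. I expect the correlation-screening step to be the main obstacle: the guarantee must hold \emph{uniformly} over all $p_n=O(\exp(n^{\delta}))$ nodes while simultaneously respecting the $O(\sqrt{n}/\log n)$ size cap, so the union bound over exponentially many Gaussian correlation-tail events has to be beaten by the polynomial separation $n^{-\kappa}$ with $\kappa<1/4$, and the threshold must be tuned so that no true edge is dropped yet the retained neighborhood stays $o(n^{1/2})$ --- this is precisely where the interplay of (A0), (B2) and (B3) is essential.
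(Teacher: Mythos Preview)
Your proposal is correct and follows essentially the same route as the paper: sure screening via Theorem~1 of \cite{FanLv2008} for step~(a), correlation screening via Theorems~1--2 of \cite{Song2014} for step~(b), the resulting bound $|D_j|=o(n^{1/2})$, and then Lemma~\ref{lemma01} together with \cite{FanGH2012} for the consistency of $\hat{\sigma}_n^2$ and Slutsky's theorem. The only place the paper is more explicit is the size-cap compatibility you gloss as ``a slight adjustment of constants'': it uses (B3), namely $\tau<\tfrac{1}{2}-2\kappa$, to force $2\kappa+\tau=\tfrac{1}{2}-\varepsilon$ for some $\varepsilon>0$, so that the SIS-retained set of order $O(n^{1-\vartheta})$ with $1-\vartheta>2\kappa+\tau$ can be taken as $O(n^{1/2-\varepsilon/2})=o(n^{1/2})$.
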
 
 \begin{proof} 
  With conditions (A0) and (B1)-(B3), we can apply Theorem 1 and Theorem 2 of \cite{Song2014}
 to obtain that $P(\hat{\xi}_{j}\supseteq \xi_{j})=1-o(1)$, for $j=1,\dots,p_{n}$. 
 In addition, $P(|\hat{\xi}_j|\leq o(n^{\frac{1}{2}}))=1-o(1)$ holds with   
 an appropriate threshold of correlation coefficients. 
 Similarly, with the conditions (A0), (A6), and (B3), we can apply Theorem 1 of \cite{FanLv2008}
 to obtain that $P(\hat{\bS}_* \supseteq \bS_*)=1-o(1)$. 
 According to Theorem 1 of \cite{FanLv2008}, the size of $\hat{\bS}_*$ can be chosen in the order
 of $O(n^{1-\vartheta})$ for some $\vartheta<1-2\kappa-\tau$, i.e., $1-\vartheta> 2\kappa+\tau$.   
 While, by (B3),  we have $2\kappa+\tau=\frac{1}{2}-\varepsilon <1/2$ for some 
 $\varepsilon>0$. Therefore, we can choose $1-\vartheta=\frac{1}{2}-\frac{\varepsilon}{2}$, 
 which implies $|\hat{\bS}_*|=o(n^{1/2})$. As suggested in the algorithm, we can choose 
 the size of $\hat{\bS}_*$ in the order $O(\sqrt{n}/\log(n))$. 
 Putting these together, we have $P(|D_j|=|\{j\}\cup\hat{\xi}_{j}\cup \hat{\bS}_*|=o(n^{1/2}))=1-o(1)$.
 By condition (A9), we can get the consistency of $\hat{\sigma}_n^2$
 based on Theorem 1 of \cite{FanGH2012} and the asymptotic $P(|D_j|\log(p_n)=o(n))=1-o(1)$.
 Finally, by applying Lemma 1 and Slutsky's theorem, 
 we can conclude the proof of the theorem. 
\end{proof} 

Algorithm \ref{subsetAlg2} can have many variants.
For example, we can replace step (a) by Lasso, which has been empirically found to perform rather well
for screening in comparison to marginal correlation screening, 
see \cite{BuhlmannM2014} for empirical comparisons. We can also replace step (a) by 
the Henze-Zirkler sure independence screening (HZ-SIS) algorithm \citep{XueLiang2017}.  
 
 HZ-SIS is a model-free algorithm, which is developed based on Henze-Zirkler's multivariate normality test \citep{HenzeZ1990}.  When applying HZ-SIS to Gaussian data, the nonparanormal transformation \citep{LiuH2009} step might be omitted, but including such a step, 
 which provides an appropriate truncation for extreme values,  might robustify the performance of the algorithm. For Gaussian data, this algorithm is equivalent to test whether the correlation coefficient is zero. Since HZ-SIS possesses the ranking consistency property, which stems from the consistency of the Henze-Zirkler's test and is 
 established in Theorem \ref{RCPthem} below,
 it is expected to perform better than the SIS algorithm by \cite{FanLv2008}.  
 The ranking consistency property is a stronger property than sure screening,
 which enables an asymptotic separation of the cases 
 of zero correlation and non-zero correlation. Further, under condition (C4), we 
 will be able to restrict the size of $\hat{\bS}_*$ to $O(\sqrt{n}/\log(n))$.  


\paragraph{Ranking Consistency of the HZ-SIS Algorithm:} 
 Let $D=\min\{k:$ $Y$ and $X_k$ are marginally dependent$\}$, and 
 let $I=\min\{k:$ $Y$ and $X_k$ are marginally independent$\}$. 
 Let $\omega_k$ denote the screening statistic for the variable $X_k$, which is defined by
 \[
 \omega_k=\int_{\mR^2} \left| \psi_k(\bt)-\exp(-\frac{1}{2}\bt' \bt) \right| \phi_{\beta}(\bt) d \bt,
 \]
 where $\phi_{\beta}(\bt)$ is the PDF of $N(0,\beta^2 I_2)$,  $\beta$ is a smooth parameter with the
 optimal value $(1.25n)^{1/6}/\sqrt{2}$; and 
 $\psi_k(\bt)$ is the characteristic function of $(\Phi^{-1}(F_k(X_k)), \Phi^{-1}(F_y(Y)))$, 
 $\Phi$ denotes the CDF of the standard normal 
 distribution, and $F_k$ and $F_y$ denote the CDFs of $X_k$ and $Y$, respectively. 
 
\begin{enumerate} 
 \item[(C1)] There exist positive constants $c>0$ and $0\leq \kappa \leq 1/4$ such that 
   $\min_{k\in D} \omega_k \geq 2 cn^{-\kappa}$.

 \item[(C2)] The dimension $p=O(\exp(n^{\tau}))$ for some constant $0 \leq \tau < (1-4\kappa)/3$. 
 
 \item[(C3)] $\lim\inf_{p\to \infty}\{\min_{k\in D} \omega_k-\max_{k \in I} \omega_k\} \geq c_3$, 
             where $c_3>0$ is a constant. 
\item[(C4)] $|D|=o(\sqrt{n})$.
\end{enumerate} 
 
 Since the Henze-Zirkler test is consistent \citep{HenzeZ1990}, 
 for which the power tends to 1 for all cases in $D$, 
 (C3) is a relatively weaker assumption. Let $\hat{\omega}_k$ denote the Henze-Zirkler test 
 statistic, i.e., an estimator of $\omega_k$,   
 calculated with $\beta=(1.25 n)^{1/6}/\sqrt{2}$ and $n$ observations, and let $\hat{D}_n$ denote 
 the estimate of $D$. Under conditions (C1) and (C2), 
 \cite{XueLiang2017} proved the screening property $P(D \subseteq \hat{D})=1-o(1)$ 
 as $n\to \infty$. Under conditions (C1)-(C3), the ranking consistency property 
 can be established as follows: 
 
 \begin{theorem} \label{RCPthem} If conditions (C1)-(C3) hold, then 
  $\lim\inf_{n\to \infty} \{ \min_{k\in D} \hat{\omega}_k-\max_{k \in I} \hat{\omega}_k \}>0$, a.s. 
 \end{theorem}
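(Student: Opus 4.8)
The plan is to derive the ranking consistency from a \emph{uniform} deviation bound for the screening statistics $\hat\omega_k$, combined with the first Borel--Cantelli lemma and the separation hypothesis (C3). Set $\epsilon_n=\tfrac{c}{2}n^{-\kappa}$ (and, in the boundary case $\kappa=0$, a slowly decaying sequence $\epsilon_n\to 0$ specified below), and put $M_n:=\max_{1\le k\le p}|\hat\omega_k-\omega_k|$. The central intermediate claim I would establish is
\[
P(M_n\ge \epsilon_n)\ \le\ C_1\, p\, \exp(-C_2 n^{\vartheta}),\qquad n\ge 1,
\]
for positive constants $C_1,C_2$ and a rate $\vartheta>0$ that, with the chosen smoothing parameter $\beta=(1.25n)^{1/6}/\sqrt2$, can be taken strictly larger than $\tau$ under (C1)--(C2); this is the almost-sure sharpening of the in-probability deviation bound already used by \cite{XueLiang2017} to prove the sure screening property $P(D\subseteq\hat D)=1-o(1)$.

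To prove the displayed bound I would split $\hat\omega_k-\omega_k$ into a centered fluctuation $\hat\omega_k-E\hat\omega_k$ and a bias $E\hat\omega_k-\omega_k$. For the fluctuation, the integrand $|\psi_k(\bt)-\exp(-\tfrac12\bt'\bt)|$ is bounded by $2$ and $\phi_{\beta}$ is a probability density, while the normal-score map $\Phi^{-1}(\cdot)$ is Lipschitz on the truncated range used by HZ-SIS; hence replacing one of the $n$ observations perturbs the empirical CDFs $\hat F_k,\hat F_y$ at each evaluation point by $O(1/n)$, and therefore perturbs $\hat\omega_k$ by $O(1/n)$ uniformly in $k$. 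McDiarmid's bounded-difference inequality then gives $P(|\hat\omega_k-E\hat\omega_k|\ge t)\le 2\exp(-c'nt^2)$, and a union bound over $k=1,\dots,p$ supplies the factor $p$. For the bias, the only discrepancy is the use of $\hat F_k,\hat F_y$ in place of $F_k,F_y$; uniform consistency of the empirical CDFs (Dvoretzky--Kiefer--Wolfowitz), continuity of $\Phi^{-1}$ away from $0$ and $1$, and continuity of the map from distributions to characteristic functions give $|E\hat\omega_k-\omega_k|=o(\epsilon_n)$ uniformly in $k$, with the diverging $\beta$ tracked carefully inside the integral.

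Granting the bound, I would invoke (C2): since $\log p=O(n^{\tau})$ with $\tau<(1-4\kappa)/3$ and $\vartheta>\tau$, the series $\sum_n C_1 p\exp(-C_2 n^{\vartheta})=\sum_n\exp(\log C_1+O(n^\tau)-C_2 n^{\vartheta})$ converges; in the case $\kappa=0$ one instead takes $\epsilon_n=n^{-\gamma}$ with $0<\gamma<(1-\tau)/2$, so $n\epsilon_n^2=n^{1-2\gamma}$ still dominates $n^\tau$ and summability is retained. By Borel--Cantelli, $M_n<\epsilon_n$ for all large $n$, almost surely. On that event, for all large $n$,
\[
\min_{k\in D}\hat\omega_k-\max_{k\in I}\hat\omega_k\ \ge\ \Big(\min_{k\in D}\omega_k-\max_{k\in I}\omega_k\Big)-2\epsilon_n,
\]
and since $\epsilon_n\to 0$, taking $\liminf_{n\to\infty}$ and using (C3) yields $\liminf_{n\to\infty}\{\min_{k\in D}\hat\omega_k-\max_{k\in I}\hat\omega_k\}\ge c_3>0$ almost surely, which is the assertion.

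The hard part will be the uniform, genuinely exponential deviation bound with a summable tail: unlike a plain marginal-correlation statistic, $\hat\omega_k$ is a functional of empirical CDFs fed through $\Phi^{-1}$ (hence not an average of i.i.d.\ summands), and it is integrated against the weight $\phi_\beta$ whose spread $\beta\asymp n^{1/6}$ diverges, so both the bounded-difference constant and the bias must be controlled while keeping the exponent $\vartheta$ above $\tau$. The room for doing so is precisely what condition (C2) buys by requiring $\tau<(1-4\kappa)/3$ rather than merely $\tau<1-2\kappa$.
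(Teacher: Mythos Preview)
Your proposal is correct and follows essentially the same route as the paper: bound $P\{\max_{1\le k\le p}|\hat\omega_k-\omega_k|>\epsilon_n\}$ by $O(p\exp(-c\,n^{\vartheta}))$, verify summability over $n$ using (C2), apply Borel--Cantelli, and then combine the resulting almost-sure uniform bound with the separation assumption (C3). The only real difference is packaging: the paper does not re-derive the exponential deviation inequality but invokes it directly from Lemma~1 of \cite{XueLiang2017}, which already supplies the exponent $\vartheta=(1-4\kappa)/3$; your McDiarmid/DKW sketch is aimed at reproving exactly that lemma, so you may simply cite it and shorten the argument considerably.
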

 \begin{proof} 
  The theorem can be proved in a similar way to Theorem 2.2 of \cite{CuiLZ2015}. Calculate the probability  
  \[
  \begin{split} 
  & P\{ \min_{k\in D} \hat{\omega}_k -\max_{k \in I} \hat{\omega}_k < c_3/2\} \leq 
    P\{ (\min_{k\in D} \hat{\omega}_k -\max_{k \in I} \hat{\omega}_k)-
        (\min_{k\in D} {\omega}_k -\max_{k \in I} {\omega}_k) < -c_3/2 \} \\ 
  & \leq  P\{ |(\min_{k\in D} \hat{\omega}_k -\max_{k \in I} \hat{\omega}_k)-
        (\min_{k\in D} {\omega}_k -\max_{k \in I} {\omega}_k)| > c_3/2 \}  
    \leq P\{ 2\max_{1 \leq k \leq p} |\hat{\omega}_k-\omega_k|>c_3/2 \} \\
  & \leq O(p \exp(-c_1 n^{\frac{1-4\kappa}{3}})), \\
  \end{split}
  \]
 by letting $c_3=n^{-\kappa}$, where the last inequality follows from Lemma 1 of 
 \cite{XueLiang2017}. Then, for some $n_0$, 
 $\sum_{n=n_0}^{\infty} p \exp(-c_1 n^{\frac{1-4\kappa}{3}}) \leq \sum_{n=n_0}^{\infty} n^{-2} <  \infty$. 
 Therefore, we obtain that $\lim\inf_{n\to \infty} \{\min_{k\in D} \hat{\omega}_k-\max_{k\in I} \hat{\omega}_k\} 
  \geq c_3/2>0$ a.s. 
 \end{proof}

 \subsubsection{Illustration of Algorithm \ref{subsetAlg2}}  \label{Alg2Sect}
 
  Algorithm \ref{subsetAlg2} was also applied to the same 100 datasets simulated in Section 3.2 of the main text. 
  For simplicity, we set the same upper bound for the Markov blanket size $|\hat{\xi}_j|$ 
  and the model size $|\hat{\bS}_*|$. Let $m$ denote the upper bound. 
  Table \ref{screentab} 
  summarizes the outputs of the algorithm with different values 
  of $m=3$, 5, 8, 15 and 20.  
  The results indicate that Algorithm \ref{subsetAlg2} performs well for this example. 
  Although the widths of the resulting confidence intervals vary with the value of $m$, the 
  coverage rates are pretty stable, which are always close to the nominal 
  level 0.95 for both zero and nonzero regression coefficients.
  In the case $m=3$, where some true features are missed in each subset regression, 
  Algorithm \ref{subsetAlg2} still performs reasonably well, although the resulting 
  confidence intervals are a little wider.  
  When $m=8$, the algorithm attains its  best performance. Of course, such a choice of $m$ depends on how the data was generated. 
Intuitively, the best choice of $m$ should be the smallest size such that all true features can be covered by a reduced feature set of that size and for almost every variable, its Markov blanket can be covered by a reduced neighborhood of that size.

 \begin{table}[htbp] 
\caption{Coverage rates and widths of the 95\% confidence intervals 
produced by Algorithm \ref{subsetAlg2} for the Toeplitz-covariance linear regression 
with different values of $m$, which controls the size of Markov neighborhoods. 
Refer to the caption of Table 1 of the main text for the notation. }
 \vspace{-0.2in}
\label{screentab}
\begin{center}
\begin{tabular}{ccccccc} \toprule
   Measure      &  &  $m=3$    & $m=5$    & $m=8$   & $m=15$    & $m=20$ \\ \midrule
  & signal  &  0.956(0.021) & 0.962(0.019) &  0.956(0.021) &  0.958(0.020) & 0.954(0.021) \\
 \raisebox{1.5ex}{Coverage} & noise  
            &  0.952(0.021) & 0.950(0.022) &  0.951(0.022) &  0.951(0.022) & 0.949(0.022) \\ \midrule
  & signal  &   1.023(0.035) & 0.854(0.014) &  0.839(0.011) &  0.857(0.011) & 0.876(0.011) \\
\raisebox{1.5ex}{Width}         & noise   &   2.566(0.019) & 1.610(0.054) &  0.902(0.008) &  0.935(0.007) & 0.963(0.008) \\ \bottomrule
\end{tabular}
\end{center}
\end{table}
  
  A comparison of Table 1 (of the main text)
  and Table \ref{screentab} shows that the confidence intervals produced by 
  Algorithm \ref{subsetAlg2} tend to be wider than those by Algorithm 1. This is reasonable, as 
  Algorithm 1 seeks to use the smallest Markov neighborhood for each feature 
  and thus the resulting inference is more accurate. 
  However, as shown in Table \ref{timetab2}, Algorithm \ref{subsetAlg2} is much more efficient than Algorithm 1 in computation. 
  Algorithm \ref{subsetAlg2} is an accuracy/efficiency trade-off version of Algorithm 1.

\subsubsection{Computational Complexity} 

In Algorithm \ref{subsetAlg2}, SIS was used in both variable screening and Markov blanket estimation steps. 
By \cite{FanLv2008}, the computational complexity of the variable screening step is $O(np)$. Therefore, the computational complexity of the Markov blanket estimation step is $O(np^2)$, as 
SIS needs to be done for each of $p$ features. By the size of the Markov neighborhood, 
the computational complexity of the subset regression step is $O(n^2 p)$. Therefore, the total computational complexity of Algorithm \ref{subsetAlg2} is $O(np^2)$ for small-$n$-large-$p$ problems. In contrast, as analyzed in Section 3.5 of the main text, the computational complexity of Algorithm 1 is $O(n^3p^2)$.

For the dataset used in Section 3.5 of the main text, 
Algorithm \ref{subsetAlg2} cost only 3.9 seconds when running with a single thread on an Intel(R) Xeon(R) CPU E5-2660 v3@2.60GHz machine.
Table \ref{timetab2} compares the CPU time cost by
different methods for the same dataset on the same machine, 
where the parts for desparsified-Lasso and Algorithm 1 are taken from Table 8 of the main text. 

\begin{table}[htbp]
\begin{center}
\caption{CPU times (in seconds) cost by different methods for 
a dataset with $n=200$ and $p=500$, 
 where MNR$_a$, MNR$_b$, MNR$_c$, and MNR$_d$ represent different implementations of Algorithm 1 as
 explained in Table 8 of the main text.}
\label{timetab2}
\begin{tabular}{ccccccc} \toprule
          &     & \multicolumn{4}{c}{Algorithm 1} &  \\ \cline{3-6} 
Methods  &  Desparsified-Lasso &    MNR$_a$ & MNR$_b$ & MNR$_c$ & MNR$_d$ & Algorithm \ref{subsetAlg2} \\ \midrule
CPU(s)   &   258               &    152     &  230    & 205  &  250  & 3.9 \\ \bottomrule
\end{tabular}
\end{center}
\end{table}

\subsection{Figures Illustrating the Performance of MNR in Variable Selection} 
\label{SMsect3}

\

 \begin{figure}[htbp]
\centering
\begin{center}
\begin{tabular}{c}
\epsfig{figure=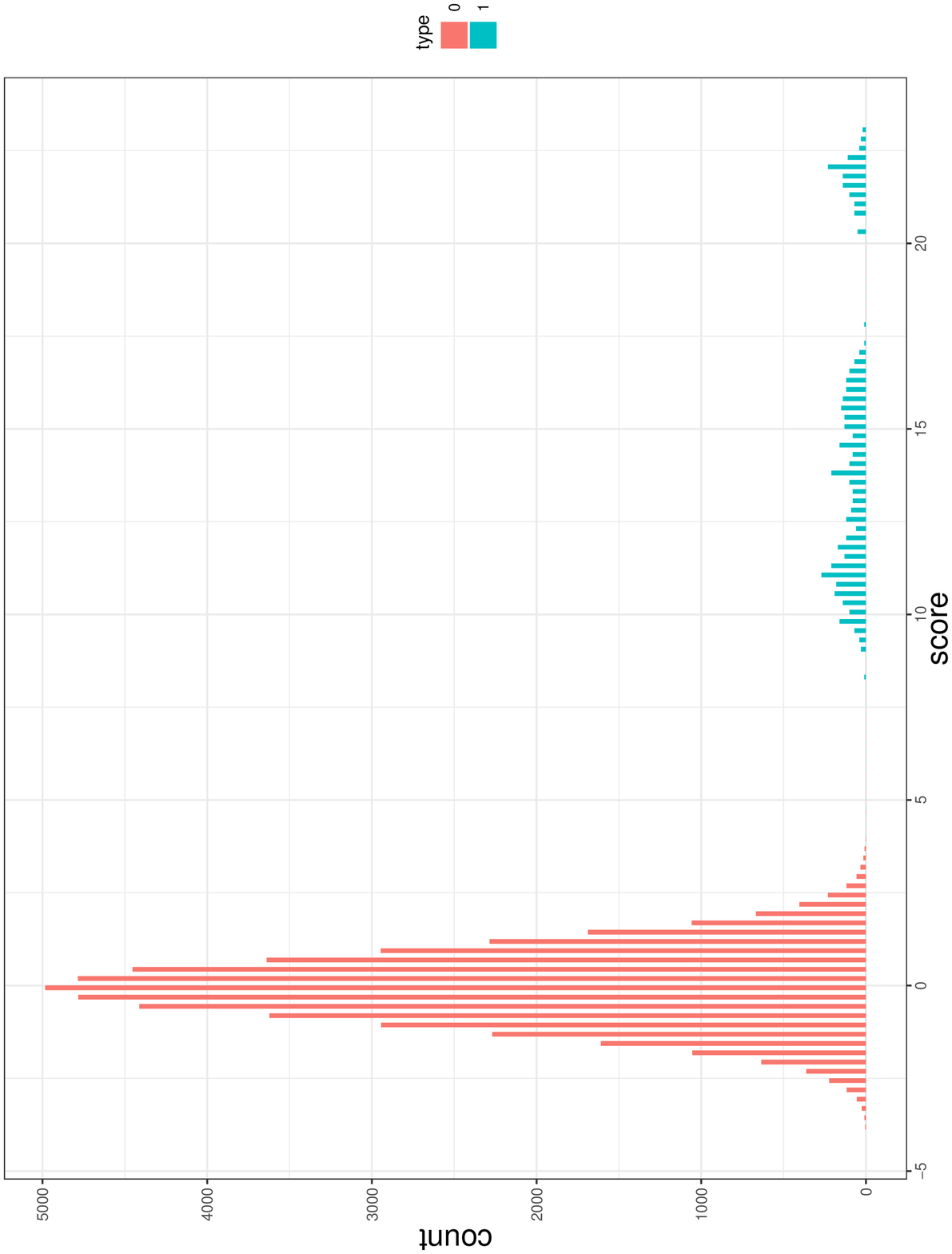,height=5.0in,width=2.5in, angle=270} 
\end{tabular}
\end{center}
\caption{Histogram of the z-scores produced by MNR for the Toeplitz-covariance linear regression example, 
where the z-scores of  the true and false features are shown in blue and red, respectively. The z-scores of the true 
 features have been replicated 10 times for a better view of the plot. }
\label{histsela}
\end{figure}

 \begin{figure}[htbp]
\centering
\begin{center}
\begin{tabular}{c}
\epsfig{figure=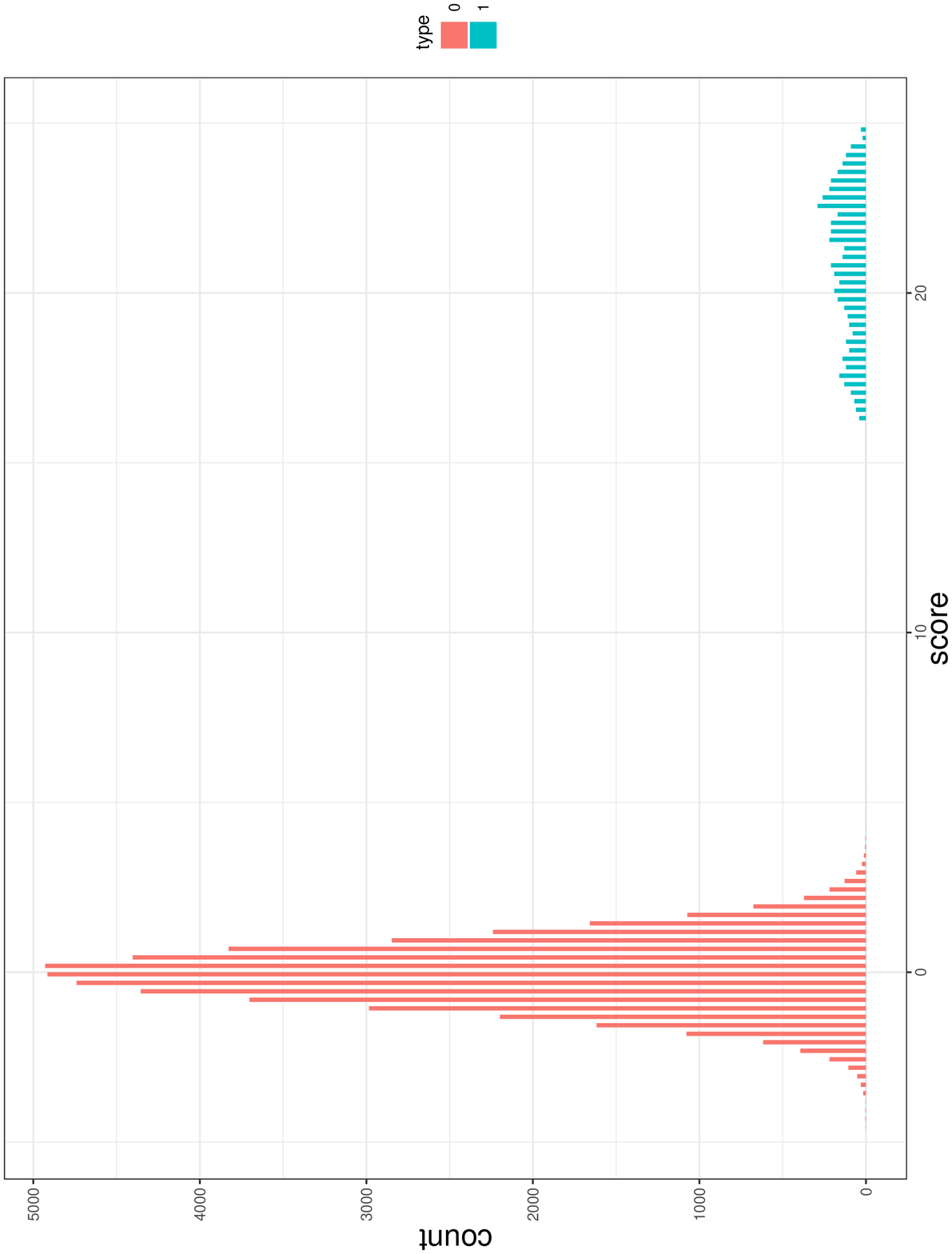,height=5.0in,width=2.5in,angle=270}  
\end{tabular}
\end{center}
\caption{Histogram of the z-scores produced by MNR for the AR(2)-precision linear regression example, where 
the z-scores of  the true and false features are shown in blue and red, respectively. The z-scores of the true 
 features have been replicated 10 times for a better view of the plot. }
\label{histselb}
\end{figure}

\begin{figure}[htbp]
\centering
\begin{center}
\begin{tabular}{c}
\epsfig{figure=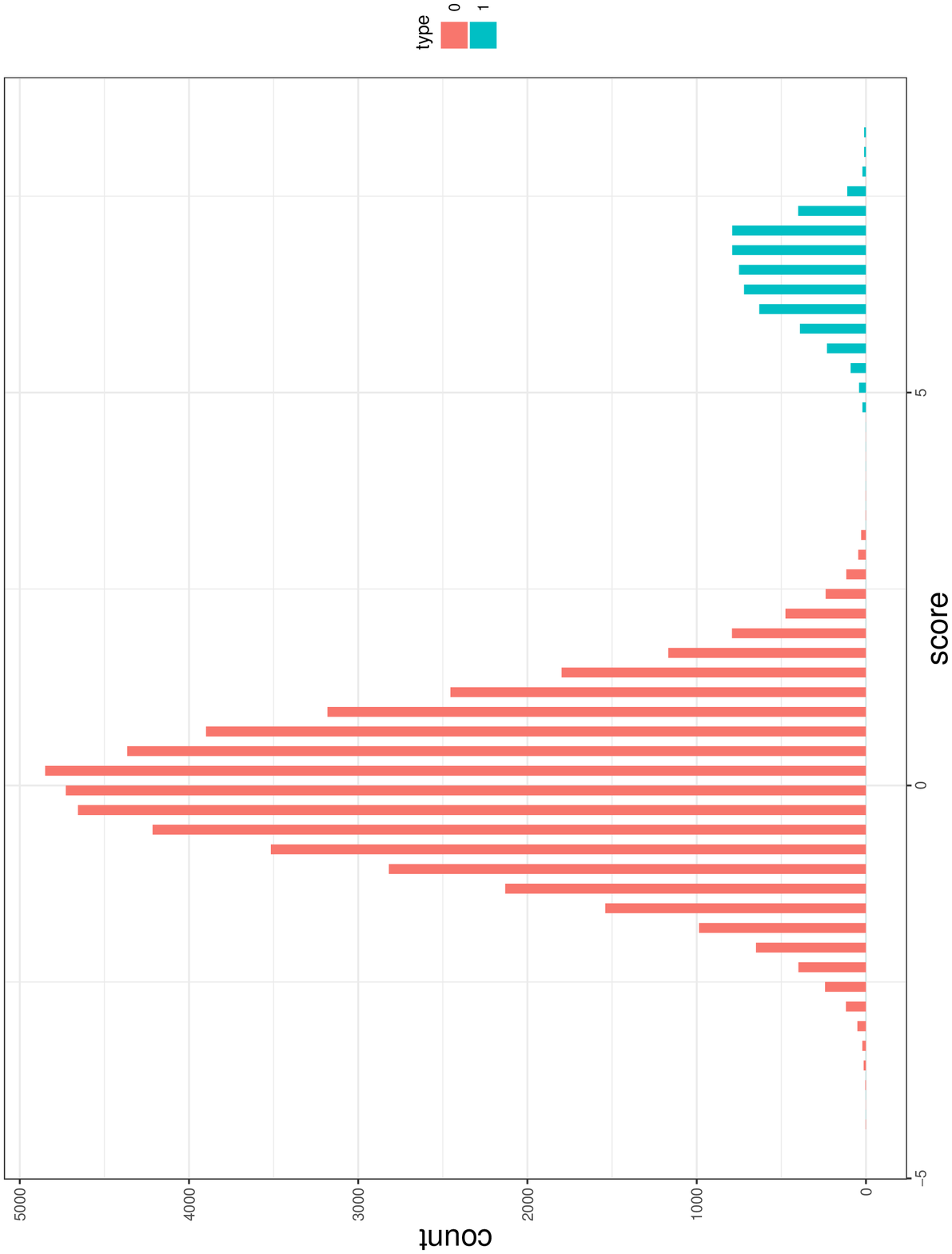,height=5.0in,width=2.5in,angle=270} 
\end{tabular}
\end{center}
\caption{Histogram of the z-scores produced by MNR for the AR(2)-precision logistic regression example,  where
the z-scores of  the true and false features are shown in blue and red, respectively. The z-scores of the true features have been replicated 10 times for a better view of the plot. }
\label{histselc}
\end{figure}

 \begin{figure}[htbp]
\centering
\begin{center}
\begin{tabular}{c}
\epsfig{figure=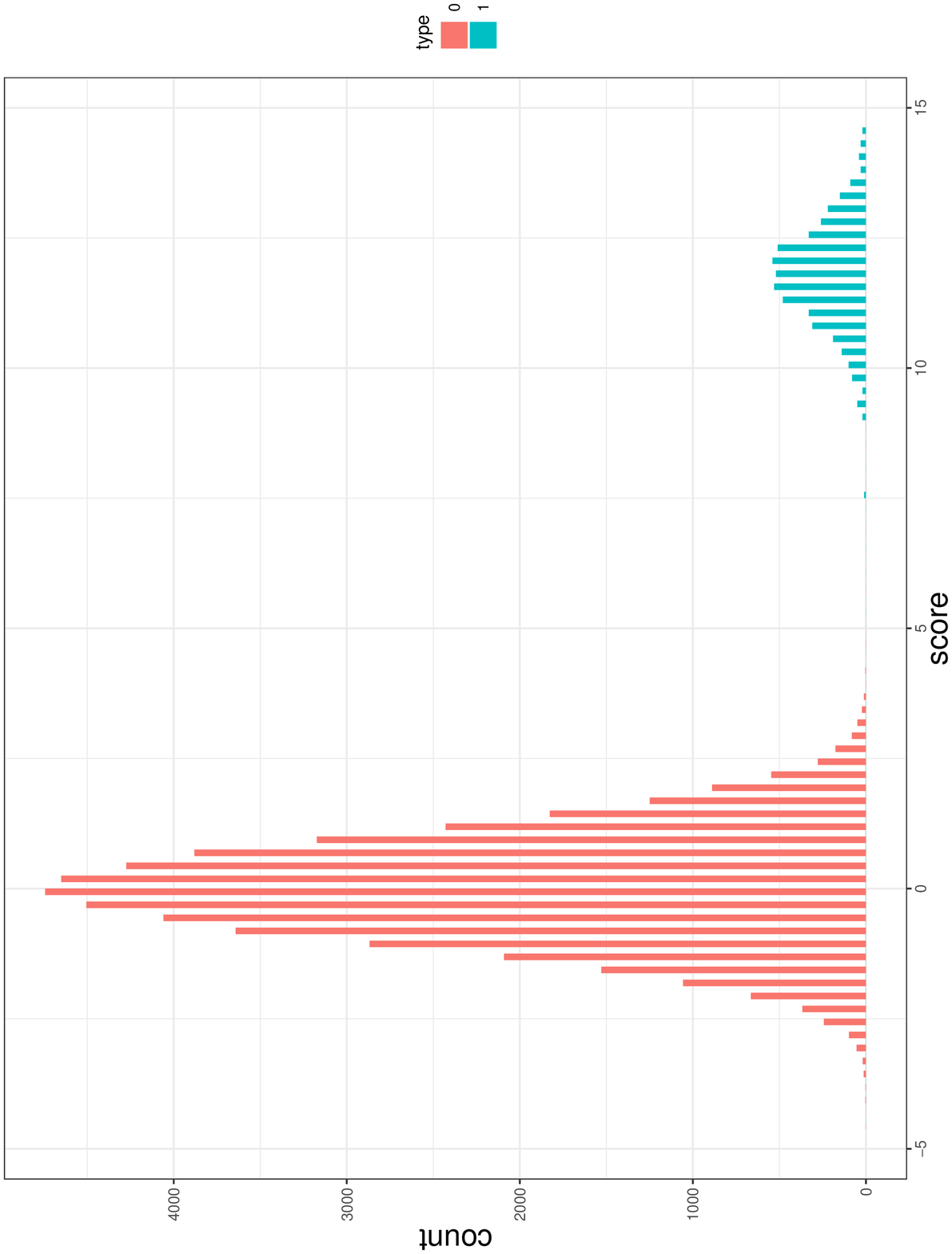,height=5.0in,width=2.5in,angle=270}
\end{tabular}
\end{center}
\caption{Histogram of the z-scores produced by MNR for the AR(2)-precision Cox regression example, where 
the z-scores of  the true and false features are shown in blue and red, respectively. The z-scores of the true  features have been replicated 10 times for a better view of the plot. }
\label{histseld}
\end{figure}

 \newpage

\bibliography{ref}
\end{document}